\patchcmd{\thmhead}{(#3)}{#3}{}{}
\DeclareMathOperator{\supp}{supp} 
\DeclareMathOperator{\ini}{in} 
\DeclareMathOperator{\N}{N} 
\DeclareMathOperator{\ev}{ev} 
\DeclareMathOperator{\Tr}{Tr}
\DeclareMathOperator{\wt}{wt}
\DeclareMathOperator{\Cart}{Cart}
\newcommand{\F}{{\mathbb{F}}}
\newcommand{\fq}{\mathbb{F}_q}
\newcommand{\fqs}{\mathbb{F}_{q^s}}
\newcommand{\xu}{\mathcal{X}_u}
\newcommand{\M}{\mathcal{M}}
\newcommand{\Md}{\mathcal{M}_{\leq d}}
\DeclarePairedDelimiter\abs{\lvert}{\rvert}%
\DeclarePairedDelimiter\norm{\lVert}{\rVert}%
\let\oldabs\abs
\def\abs{\@ifstar{\oldabs}{\oldabs*}}
\let\oldnorm\norm
\def\norm{\@ifstar{\oldnorm}{\oldnorm*}}
\newtheorem{thm}{Theorem}[section]
\newtheorem{prop}[thm]{Proposition}
\newtheorem{cor}[thm]{Corollary}
\newtheorem{lem}[thm]{Lemma}
\theoremstyle{definition}
\newtheorem{defn}[thm]{Definition} 
\newtheorem{rem}[thm]{Remark} 
\newtheorem{ex}[thm]{Example}
\def\fq{\mathbb{F}_q}
\def\fqs{\mathbb{F}_{q^s}}
\newcommand{\cal}[1]{\mathcal{#1}}
\newcommand{\cX}{\cal X}
\title[The weight hierarchy of decreasing norm-trace codes]{The weight hierarchy of decreasing norm-trace codes}
\author{Eduardo Camps-Moreno}
\address[Eduardo Camps-Moreno]{Department of Mathematics\\ Virginia Tech\\ 
Blacksburg, VA USA}
\email{eduardoc@vt.edu}
\author{Hiram H. L\'opez}
\address[Hiram H. L\'opez]{Department of Mathematics\\ Virginia Tech\\ 
Blacksburg, VA USA}
\email{hhlopez@vt.edu}
\author{Gretchen L. Matthews}
\address[Gretchen L. Matthews]{Department of Mathematics\\ Virginia Tech\\ Blacksburg, VA USA}
\email{gmatthews@vt.edu}
 \author{Rodrigo San-José}
 \address[Rodrigo San-José]{IMUVA-Mathematics Research Institute\\ Universidad de Valladolid\\ 47011 Valladolid, Spain}
\email{rodrigo.san-jose@uva.es}
\thanks{Hiram H. L\'opez was partially supported by NSF DMS-2401558. Gretchen L. Matthews was partially 
supported by NSF DMS-2201075 and the Commonwealth Cyber Initiative. Rodrigo San-José was partially supported by Grant TED2021-130358B-I00 funded by MICIU/AEI/10.13039/501100011033 and by the ``European Union NextGenerationEU/PRTR'', Grant PID2022-138906NB-C21 funded by MICIU/AEI/10.13039/501100011033 and by ERDF/EU, Grant QCAYLE supported by the European Union.-Next Generation UE/MICIU/PRTR/JCyL, and Grants FPU20/01311 and EST23/00777 funded by the Spanish Ministry of Universities.}
\subjclass[2020]{94B05, 11T71, 13P25, 81P70}
\keywords{Evaluation codes, decreasing monomial codes, norm-trace curves, generalized Hamming weights, one-point algebraic geometry codes, quantum codes, footprint bound}
\begin{document}

\maketitle
\markleft{Eduardo Camps-Moreno, Hiram H. L\'opez, Gretchen L. Matthews, Rodrigo San-Jos\'e}

\begin{abstract}
The Generalized Hamming weights and their relative version, which generalize the minimum distance of a linear code, are relevant to numerous applications, including coding on the wire-tap channel of type II, $t$-resilient functions, bounding the cardinality of the output in list decoding algorithms, ramp secret sharing schemes, and quantum error correction. The generalized Hamming weights have been determined for some families of codes, including Cartesian codes and Hermitian one-point codes. In this paper, we determine the generalized Hamming weights of decreasing norm-trace codes, which are linear codes defined by evaluating monomials that are closed under divisibility on the rational points of the extended norm-trace curve given by $x^{u} = y^{q^{s - 1}} + y^{q^{s - 2}} + \cdots + y$ over the finite field of cardinality $q^s$, where $u$ is a positive divisor of $\frac{q^s - 1}{q - 1}$. As a particular case, we obtain the weight hierarchy of one-point norm-trace codes and recover the result of Barbero and Munuera (2001) giving the weight hierarchy of one-point Hermitian codes. We also study the relative generalized Hamming weights for these codes and use them to construct impure quantum codes with excellent parameters.
\end{abstract}

\section{Introduction}

Generalized Hamming weights for linear codes were introduced by Wei in 1991 \cite{weiGHW} as a tool in characterizing code performance on the wire-tap
channel of type II. Wei noted an application to $t$-resilient functions; established properties of the weight hierarchy, meaning the collection of generalized Hamming weights; and determined the generalized Hamming weights for first-order Reed-Muller codes, Hamming codes, extended Hamming codes, duals of Hamming codes, Reed-Solomon codes, and any maximum distance separable (MDS) code. Other related independent developments were Helleseth, Kl{\o}ve, and Mykkeltveit \cite{HELLESETH1977179_77, Helleseth_Klove_Ytrehus_92}. 
A flurry of activity followed. For instance, Feng, Tzeng, and Wei \cite{Feng_Tzeng_Wei_91} obtained bounds on the generalized Hamming weights of several families of BCH codes and cyclic codes; see also \cite{Janwa_Lal_97}. The generalized Hamming weights of convolutional codes were considered by Rosenthal and York in~\cite{Rosenthal_York_97}, and more recently with an alternative definition by Gorla and Salizzoni \cite{Gorla_Salizzoni_23}.

In~\cite{Yang_Kumar_Stichtenoth_94}, Yang, Kumar, and Stichtenoth initiated the study of generalized Hamming weights of algebraic geometry codes in 1994. In 1998, Heijnen and Pellikaan gave in~\cite{pellikaanGHWRM} an order bound on the generalized Hamming weights similar to the Feng-Rao bound on the minimum distance of a code. This order bound allowed Heijnen and Pellikaan to find the weight hierarchy for higher-order Reed-Muller codes and to provide insight into the family of one-point algebraic geometry codes. Later, Barbero and Munuera determined the weight hierarchy for one-point Hermitian codes in~\cite{Barbero_Munuera_00}. Additional work on generalized Hamming weights of algebraic geometry includes \cite{Delgado_13, BrasAmoros_14}.

Additional applications of generalized Hamming weights include improved bounds on list size in list decoding algorithms for tensor products and interleaved codes \cite{Gopalan_Guruswami_Raghavendra_11}. There has also been recent work on the generalized Hamming weights of many other well-known families of codes, such as affine Cartesian codes \cite{BEELEN2018130}, hyperbolic codes \cite{campsGHWHyperbolic}, projective Reed-Muller codes \cite{beelenGHWPRM,sanjoseRecursivePRM}, and matrix-product codes \cite{sanjoseGHWMPC}, to mention some.

In this paper, we determine the generalized Hamming weights of extended norm-trace codes. The weight hierarchies of the norm-trace and Hermitian codes are special cases of our results, allowing us to recover the results of Barbero and Munuera. The codes we consider are defined on the {\it extended norm-trace curve}, denoted by $\cX_u$, which is the affine curve defined over the finite field $\F_{q^s}$ with cardinality $q^s$  by the equation
\[
x^{u} = y^{q^{s - 1}} + y^{q^{s - 2}} + \cdots + y,
\]
where $u$ is a positive integer such that $u \mid \frac{q^s - 1}{q - 1}$.
Norm-trace codes are defined using the {\it norm-trace curve} which is the special case when $u= \frac{q^s - 1}{q - 1}$, meaning the  affine curve over $\F_{q^s}$ defined by \[ \N(x)=\Tr(y) \]
where $\N(x)$ is the norm and $\Tr(y)$ is the trace, both taken with respect to the 
extension $\F_{q^s}/\F_q$. Taking in addition $s=2$ and $u= \frac{q^s - 1}{q - 1}$ (resp., $u \mid \frac{q^s - 1}{q - 1}$) gives the special case of the Hermitian curve defined by $x^{q+1}=y^q+y$ (resp., its quotient $x^{u}=y^q+y$) over $\F_{q^2}$.

We focus in this work on decreasing norm-trace codes, which are codes defined by evaluating monomials on the rational points of the curve $\cX_u$. To define the codes, enumerate the rational points on $\cX_u$ so that $\cX_u = \left\{P_1,\ldots,P_n \right\} \subseteq \fqs^2$ and $n=u(q-1)q^{s-1}+q^{s-1}$.
The \textit{evaluation map}, denoted ${\rm ev}$, is the $\fqs$-linear map given by  
$$
\begin{array}{lccc}
{\rm ev}\colon &\fqs[x,y] &\rightarrow& \fqs^{n}\quad \\
&f & \mapsto& \left(f(P_1),\ldots,f(P_n)\right).
\end{array}
$$
Let $\mathcal{M} \subseteq \fqs[x,y]$ be a set of monomials closed under 
divisibility, meaning that if $M\in \mathcal{M}$ and $M^\prime$ divides $M$, 
then $M^\prime \in \mathcal{M}$. Let $\mathcal{L}(\M)$ be the 
$\fqs$-subspace of $\fqs[x,y]$ generated by the set $\mathcal{M}$. We call the image of 
$\mathcal{L}(\M)$ under the evaluation map, denoted by ${\rm ev}(\mathcal{M})$, a {\it decreasing norm-trace code}. 
We can see that the extended norm-trace codes introduced and recently studied 
in \cite{bras-amor} and \cite{Heera-Pin} are particular instances of decreasing 
norm-trace codes; norm-trace codes and generalizations also appear in \cite{BO}. Moreover, we check later that the family of decreasing 
norm-trace codes contains, as a specific case, the family of one-point 
algebraic geometry codes over the norm-trace curve (in particular, it contains the family of one-point algebraic geometry codes over the Hermitian curve).

We organize this paper as follows. The necessary background is contained in Section \ref{sec:prelim}. In Section \ref{sec:GHW_dec}, we determine the weight hierarchy of decreasing norm-trace codes. 
In Section \ref{sec:GHW_RMlike}, we find the generalized Hamming weights for a class of Reed-Muller-type codes defined over the norm-trace curve. In Section \ref{sec:RGHW}, we show how to adapt our techniques to study the relative generalized Hamming weights of decreasing norm-trace codes. As an application, we consider one-point algebraic geometry codes on the extended norm-trace curve, a particular case of decreasing norm-trace codes, and we use them to construct impure quantum codes. 

\section{Preliminaries} \label{sec:prelim}

We use the standard notation from finite fields and coding theory. Let $\fq$ be the finite field with $q$ elements. The set of polynomials in variables $x_1, \dots x_m$ with coefficients in $\F_q$ is denoted by $\F_q[x_1, \dots, x_m]$. The affine plane over $\F_q$ is denoted by $\mathbb A^2(\F_q)$.

Let $C$ be an $[n,k,d]$ code $C$ over $\F_q$, meaning a $k$-dimensional $\fq$-subspace of $\F_q^n$ with minimum distance $d:=\left\{ \wt(c): c \in C \setminus \{ 0\} \right\}$ where the weight of $w \in \F_q^n$ is $\wt(w):=\abs{\left\{ i \in [n]: w_i \neq 0 \right\}}$ and $[n]:=\left\{ 1, \dots, n \right\}$. The support of the  code $C$ is 
$$
\supp(C):=\left\{ i \in [n]: \exists c \in C \textnormal{ with } c_i \neq 0 \right\}.
$$

\begin{defn} \label{def:ghw}
       The $r^{th}$-generalized Hamming weight of an $[n,k,d]$ code $C$ is 
$$
d_r(C):=\min \left\{ \abs{ \supp(C') }: C' \textnormal{ is a subcode of } C \textnormal{ of dimension } r \right\},
$$
where $r \in [k]$. The weight hierarchy of $C$ is the set 
$$
\left\{ d_r(C): r \in [k] \right\}. 
$$
\end{defn}

Notice that $d_1(C)$ is the minimum distance of $C$. We may write $d_r$ to mean $d_r(C)$ if the code $C$ is evident from the context. Wei obtained the following general properties of the generalized Hamming weights in \cite{weiGHW}.

\begin{thm}[(Monotonicity)]\label{T:monotonicity}
For an $[n,k]$ linear code $C$ with $k>0$, we have
$$
1\leq d_1(C)<d_2(C)<\cdots <d_k(C)\leq n.
$$
\end{thm}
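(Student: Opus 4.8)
The plan is to prove the monotonicity theorem $1 \leq d_1(C) < d_2(C) < \cdots < d_k(C) \leq n$ by establishing two things: that the $d_r$ are well-defined and bounded (the outer inequalities), and that they are strictly increasing (the chain of strict inequalities). The outer bounds are the easy part. Since $C$ has positive dimension, there exist nonzero codewords, so $d_1(C) \geq 1$ is immediate from the definition of support. For the upper bound $d_k(C) \leq n$, note that the only subcode of $C$ of dimension $k$ is $C$ itself, and its support is contained in $[n]$, so $\abs{\supp(C)} \leq n$. Thus the real content is the strict monotonicity.

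The heart of the argument is to show $d_{r}(C) < d_{r+1}(C)$ for each $r$ with $1 \leq r < k$. I would approach the contrapositive/direct inequality in the more tractable direction: first prove the weaker statement $d_r(C) \leq d_{r+1}(C)$, and then upgrade it to strict inequality. The weak inequality follows because any subcode $C''$ of dimension $r+1$ achieving $d_{r+1}(C) = \abs{\supp(C'')}$ contains a subcode $C'$ of dimension $r$ with $\supp(C') \subseteq \supp(C'')$, so $d_r(C) \leq \abs{\supp(C')} \leq \abs{\supp(C'')} = d_{r+1}(C)$. To get strictness, the key step is the following claim: given any subcode $D$ of dimension $r$ that achieves $d_r(C) = \abs{\supp(D)}$, one can always find a codeword $c \in C$ with $c \notin D$ such that $\supp(D + \langle c \rangle)$ is strictly larger than $\supp(D)$. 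If this holds, then $D + \langle c \rangle$ is an $(r+1)$-dimensional subcode whose support strictly contains $\supp(D)$, forcing $d_{r+1}(C) \leq \abs{\supp(D + \langle c\rangle)}$ but the strict containment of supports gives $\abs{\supp(D)} < \abs{\supp(D + \langle c \rangle)}$; combined with the already-established $d_r(C) \leq d_{r+1}(C)$ this does not immediately yield strictness, so the cleaner route is to argue directly.

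The cleanest route to strictness is a counting/projection argument. I would argue by contradiction: suppose $d_r(C) = d_{r+1}(C) =: w$, and let $S \subseteq [n]$ be the support, with $\abs{S} = w$, of some $(r+1)$-dimensional subcode $C''$ achieving $d_{r+1}$. Every subcode of $C''$ has support contained in $S$, and in particular every $r$-dimensional subcode $C' \subseteq C''$ satisfies $\supp(C') \subseteq S$. The main obstacle, and the crux of the whole proof, is to show that $C''$ must contain an $r$-dimensional subcode whose support is \emph{strictly} smaller than $S$; equivalently, that not every coordinate in $S$ can be ``essential.'' This is where I would use the field structure: consider, for a fixed coordinate $i \in S$, the evaluation (projection) map $\pi_i \colon C'' \to \F_q$ sending a codeword to its $i$-th entry. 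If $\pi_i$ is nonzero, its kernel is an $r$-dimensional subcode of $C''$ that avoids coordinate $i$, hence has support contained in $S \setminus \{i\}$ and thus support size at most $w - 1 < w$, contradicting $d_r(C) = w$. Since $i \in S = \supp(C'')$, there is some codeword in $C''$ nonzero at $i$, so $\pi_i$ is indeed nonzero, and the contradiction follows.

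I expect the main obstacle to be stating the kernel-dimension step precisely and confirming that the kernel of $\pi_i$ genuinely has dimension exactly $r$ (not smaller): since $\pi_i \colon C'' \to \F_q$ is a nonzero $\F_q$-linear functional on the $(r+1)$-dimensional space $C''$, its image is all of $\F_q$ and the rank-nullity theorem gives $\dim \ker \pi_i = (r+1) - 1 = r$, exactly as needed. One should also verify that $\ker \pi_i$ is nonzero so that its support makes sense as an $r$-dimensional subcode, which is automatic since $r \geq 1$. With this projection argument in hand, the strict inequality $d_r(C) < d_{r+1}(C)$ follows for every $r \in \{1, \dots, k-1\}$, and together with the boundary estimates this completes the proof of the full chain.
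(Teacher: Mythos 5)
Your proof is correct. The paper itself gives no proof of this statement — it is quoted from Wei \cite{weiGHW} — and your argument (the kernel of the nonzero coordinate functional $\pi_i$ on an $(r+1)$-dimensional subcode achieving $d_{r+1}$ is an $r$-dimensional subcode avoiding coordinate $i$, whence $d_r \leq d_{r+1}-1$) is precisely the standard proof, essentially Wei's original one; note that your projection step in fact proves strictness directly, so the preliminary weak inequality $d_r \leq d_{r+1}$ and the contradiction framing are dispensable.
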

\begin{cor}[(Generalized Singleton Bound)]\label{C:singletongeneralizada}
For an $[n,k]$ linear code $C$, we have
$$
d_r(C)\leq n-k+r, \; 1\leq r\leq k.
$$
\end{cor}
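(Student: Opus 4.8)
The plan is to obtain the bound as an immediate consequence of the monotonicity established in Theorem~\ref{T:monotonicity}; indeed, the statement is phrased as a corollary precisely because the argument is a short telescoping estimate. First I would fix an index $r$ with $1 \leq r \leq k$ and record the tail of the strictly increasing chain furnished by Theorem~\ref{T:monotonicity}, namely
$$
d_r(C) < d_{r+1}(C) < \cdots < d_k(C) \leq n.
$$
The one genuinely essential observation is that each $d_i(C)$ is, by Definition~\ref{def:ghw}, the cardinality of the support of a subcode, hence a nonnegative integer. Consequently a strict inequality $d_i(C) < d_{i+1}(C)$ between consecutive integers forces $d_{i+1}(C) \geq d_i(C) + 1$.

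Summing these unit increments from index $r$ up to index $k$ then gives the telescoped estimate $d_k(C) \geq d_r(C) + (k - r)$. Combining this with the upper bound $d_k(C) \leq n$ from the same chain yields $d_r(C) + (k - r) \leq d_k(C) \leq n$, and rearranging produces the desired inequality $d_r(C) \leq n - k + r$. There is no real obstacle here: the entire content is the passage from strict inequality to increments of at least one, which relies only on the integrality of the generalized Hamming weights, and monotonicity supplies everything else. (As a sanity check, the case $r = 1$ recovers the classical Singleton bound $d_1(C) = d \leq n - k + 1$, and equality for all $r$ characterizes MDS codes.)
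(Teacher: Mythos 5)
Your proof is correct and follows exactly the route the paper intends: the paper states this bound without proof, as a corollary of the monotonicity theorem (citing Wei), and your telescoping argument --- strict inequalities between integers give increments of at least one, so $d_r(C) + (k-r) \leq d_k(C) \leq n$ --- is precisely that implicit derivation. Nothing is missing.
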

\begin{thm}[(Duality)]\label{ghwdual}
Let $C$ be an $[n,k]$ code. Then
$$
\{d_r(C):1\leq r\leq k\}=\{1,2,\dots,n\}\setminus \{n+1-d_r(C^\perp):1\leq r\leq n-k\}.
$$
\end{thm}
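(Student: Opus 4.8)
The plan is to encode the whole weight hierarchy of $C$ into a single monotone ``dimension profile'' function, prove one duality identity relating the shortened codes of $C$ and $C^\perp$, and then read off the complementarity statement combinatorially.

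First I would record the shortening description of the generalized Hamming weights. For $I\subseteq [n]$, let $C(I):=\{c\in C : \supp(c)\subseteq I\}$ be the subcode of codewords vanishing off $I$. Any $r$-dimensional subcode $D$ with $\supp(D)=I$ lies in $C(I)$, and conversely every $r$-dimensional subspace of $C(I)$ has support inside $I$; hence Definition \ref{def:ghw} becomes
\[
d_r(C)=\min\{|I| : I\subseteq [n],\ \dim C(I)\ge r\}.
\]
I then set $f(v):=\max\{\dim C(I) : I\subseteq[n],\ |I|=v\}$ for $0\le v\le n$. Deleting or adjoining a single coordinate changes $\dim C(I)$ by at most one, so $f$ is nondecreasing with unit steps, $f(0)=0$ and $f(n)=k$. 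Thus $d_r(C)$ is the least $v$ with $f(v)\ge r$, and the hierarchy $W_C:=\{d_r(C):1\le r\le k\}$ is precisely the set of jump points $\{v\in[n] : f(v)=f(v-1)+1\}$; its complement in $\{1,\dots,n\}$ is the set of $n-k$ flat points, where $f(v)=f(v-1)$.

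The crux is the duality identity
\[
\dim (C^\perp)(J)=|J|-k+\dim C([n]\setminus J),\qquad J\subseteq[n].
\]
To prove it I would use the standard fact that puncturing and shortening are dual operations: writing $C|_J$ for the projection of $C$ to the coordinates of $J$ (a code in $\F_q^{J}$), one has $(C|_J)^{\perp}=(C^\perp)(J)$ inside $\F_q^{J}$, so $\dim (C^\perp)(J)=|J|-\dim C|_J$. Rank–nullity for the projection $C\to\F_q^{J}$, whose kernel is exactly $C([n]\setminus J)$, gives $\dim C|_J=k-\dim C([n]\setminus J)$, and substituting yields the identity. This step is where the linear-algebraic content lives and is the part I expect to require the most care, since it is the only place the pairing between $C$ and $C^\perp$ enters.

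Finally, letting $f^\perp$ be the profile of $C^\perp$, the identity gives $f^\perp(v)=v-k+f(n-v)$ after maximizing over $J$ of size $v$. Then
\[
f^\perp(v)-f^\perp(v-1)=1-\bigl(f(n-v+1)-f(n-v)\bigr),
\]
so $v$ is a jump point of $f^\perp$ exactly when $n-v+1$ is a flat point of $f$, that is, $v\in W_{C^\perp}\iff n+1-v\notin W_C$. Writing $w=n+1-v$ turns this into $w\in W_C\iff n+1-w\notin W_{C^\perp}$, which is exactly
\[
W_C=\{1,\dots,n\}\setminus\{n+1-d_r(C^\perp):1\le r\le n-k\},
\]
the assertion of the theorem. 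The remaining verifications (unit-step monotonicity of $f$ and the passage from the identity to $f^\perp$) are routine bookkeeping once the puncturing–shortening duality is in hand.
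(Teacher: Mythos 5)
Your proof is correct, but there is nothing in the paper to compare it against: Theorem \ref{ghwdual} is quoted as background with a citation to Wei's 1991 article, and no proof is given in the paper itself. Your argument is also not Wei's original one. Wei proves duality by showing that the two sets $\{d_r(C)\}$ and $\{n+1-d_r(C^\perp)\}$ are disjoint and then invoking a cardinality count (there are $k$ values of the first kind and $n-k$ of the second, all distinct by monotonicity, inside $\{1,\dots,n\}$). What you give instead is the dimension-profile proof in the style of Forney: the reformulation $d_r(C)=\min\{|I| : \dim C(I)\ge r\}$, the unit-step profile $f(v)=\max_{|I|=v}\dim C(I)$, and the shortening--puncturing identity $\dim (C^\perp)(J)=|J|-k+\dim C([n]\setminus J)$, from which the jump/flat-point complementarity falls out by a one-line telescoping computation. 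All the steps check out: the identity $(C|_J)^{\perp}=(C^\perp)(J)$ and the rank--nullity argument are applied correctly, and the passage $f^\perp(v)=v-k+f(n-v)$ legitimately exchanges the maximum over $J$ with the maximum over $I=[n]\setminus J$. Your route has two advantages over the disjointness-plus-counting argument: it packages the entire hierarchy into one monotone function so that no case analysis or separate distinctness lemma is needed, and it reproves the monotonicity theorem (Theorem \ref{T:monotonicity}) for free, since the jump points of a unit-step staircase from $0$ to $k$ are automatically $k$ strictly increasing integers. The price is that it leans on the puncturing--shortening duality, which, as you note, is where all the actual linear algebra between $C$ and $C^\perp$ is concentrated; your verification of that step is complete and correct.
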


We will now introduce some basic tools from commutative algebra and Gröbner basis that we will use throughout the article. We refer to \cite{cox, Villarreal2018-kz} for the details we may not cover in what follows.

Let $\M$ be a set  of monomials of $\F_{q^s}[x_1, \ldots, x_m]$ with a monomial order $\prec$, meaning a total order in which the least monomial is $1$ and $M_1 \prec M_2$ implies $M M_1 \prec M M_2$ for all $M, M_1, M_2 \in \M$. For  $f \in \F_{q^s}[x_1,\ldots, x_m] \setminus \{ 0 \}$, the {\em initial (or leading) monomial} of $f$ is  
the greatest monomial with respect to $\prec$ which appears in the expansion of $f$, denoted by $\ini(f)$. Given an ideal $I \subseteq \F_{q^s}[x_1, \ldots, x_m]$, its {\em initial ideal} is $\ini(I):= (\ini(f): f \in I )$. A {\em Gr\"{o}bner basis} for $I$ is a finite set $\mathcal G
\subseteq I$ such that for every polynomial $f \in I \setminus \{0\}$, $\ini(f)$ is a multiple of $\ini(g)$ for some $g \in \mathcal G$. In this case, $\mathcal{G}$ generates $I$. Given a Gr\"{o}bner basis $\mathcal G$ for $I$, the {\em footprint} of $I$ is the set 
$$
\Delta(I):=\left\{ M \in \M: M \not \in \ini(I) \right\}=\left\{ M \in \M: \ini(g) \nmid M \ \ \forall  g \in \mathcal G \right\}.
$$ 
Let $A\subset \fqs[x_1,\dots,x_m]$. We may abuse the notation and define $\Delta(A)$ as the footprint of the ideal generated by $A$. For an ideal $I\subset \F_{q^s}[x_1,\ldots, x_m]$, the set $\Delta(I)$ forms a basis for $\F_{q^s}[x_1,\ldots, x_m]/I$ as an $\fqs$-vector space. Therefore, if we consider a set of (classes of) polynomials in $\F_{q^s}[x_1,\ldots, x_m]/I$, we may assume that they only have monomials from $\Delta(I)$ in their expansion, and if they differ in at least one monomial (for example, the initial monomial), then they are linearly independent in $\F_{q^s}[x_1,\ldots, x_m]/I$.

The $S$-polynomial of $g,g' \in \mathcal G$, with leading coefficients $c,c'$, respectively, is the polynomial
$$
S(g,g')=c'f'g-cfg' \in I,
$$ 
where $f, f' \in \F_{q^s}[x_1, \ldots, x_m]$ are the monomials of least degree such that $ f' \ini(g)=f \ini(g')$. For an ideal $I$ whose associated variety $V(I)$ is finite, the {\it footprint bound} states that the variety has cardinality bounded by the footprint of the ideal, that is
$$
\abs{V(I)} \leq \abs{\Delta(I)}.
$$
Moreover, equality holds when $I$ is a radical ideal (for the details, see \cite[Thm. 6 and Prop. 7, Chapter 5 \textsection 3]{cox}). 

We now review the necessary prerequisites on the curves of interest. Let $s \geq 2$ be an integer. Define
the polynomials $\N(x):=x^{\frac{q^s-1}{q-1}}$ and $\Tr(y):=y^{q^{s - 1}} + 
y^{q^{s - 2}} + \cdots + y^q + y$ in $\fqs[x,y]$. The {\it trace} with respect to the 
extension $\F_{q^s}/\F_q$  is  the map
$$
\begin{array}{lccc}
\Tr: & \F_{q^s} & \to & \F_q \\
& \alpha & \mapsto & \Tr(\alpha)=\alpha+\alpha^q+\cdots +\alpha^{q^{s-1}}.
\end{array}
$$
The {\it norm} with respect to the extension $\F_{q^s}/\F_q$  is  the 
map
$$
\begin{array}{lccc}
\N: & \F_{q^s} & \rightarrow & \F_q \\
& \alpha & \mapsto & \N(\alpha)=\alpha\cdot \alpha^q \cdots \alpha^{q^{s-1}}=\alpha^{\frac{q^s-1}{q-1}}.
\end{array}
$$
The {\it norm-trace curve}, denoted by $\mathcal{X}$, is the affine plane curve over $\fqs$ given by the equation 
\[
x^{\frac{q^s - 1}{q - 1}} = y^{q^{s - 1}} + y^{q^{s - 2}} + \cdots + y.
\]
The curve $\mathcal{X}$ has been extensively studied in the literature to construct linear codes~\cite{bras-amor, geil, kim_boran, nt_lifted, MTT_08}. We are interested in a slightly more general curve. Let $u$ be a positive integer such that $u \mid \frac{q^s - 1}{q - 1}$.
The {\it extended norm-trace curve}, denoted by $\cX_u$, is the affine curve over $\F_{q^s}$ defined by the equation
\[
x^{u} = y^{q^{s - 1}} + y^{q^{s - 2}} + \cdots + y.
\]
The curve $\mathcal X_u$ has genus $g=\frac{(u-1)(q^s-1)}{2}$~\cite[Theorem 13]{nt_lifted}. In~\cite{decreasingnormtrace}, the authors use the rational points of the curve $\cX_u$ to construct a family of decreasing evaluation codes, which contains, as a particular case, the extended norm-trace codes~\cite{bras-amor, Heera-Pin}. In particular, in~\cite{decreasingnormtrace}, the authors obtain the following result about the structure of the rational points of $\mathcal X_u$.

\begin{lem} \label{puntos}
For every $\gamma\in \fq$, define $A_\gamma:=\{(\alpha,\beta)\in \mathbb{A}^2(\fqs): \Tr(\beta)=\alpha^u=\gamma\}$. Then, we have $\xu=\bigcup_{\gamma\in\fq}A_\gamma$. Moreover, $\abs{A_0}=q^{s-1}$ and $\abs{A_\gamma}=uq^{s-1}$ for all $\gamma \in \fq^*$.
\end{lem}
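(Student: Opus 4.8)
The plan is to count the rational points of $\cX_u$ by fibering over the common value $\gamma := \alpha^u = \Tr(\beta)$, which lies in $\fq$ since $\Tr$ maps $\fqs$ into $\fq$ and $\N(\alpha) = \alpha^{(q^s-1)/(q-1)}$ likewise lands in $\fq$ (and $\alpha^u$ is a power of this when $u \mid \frac{q^s-1}{q-1}$, so $\alpha^u \in \fq$ as well). This immediately justifies the decomposition $\xu = \bigcup_{\gamma \in \fq} A_\gamma$, and since the condition $\alpha^u = \Tr(\beta)$ is exactly the defining equation of the curve, each point of $\xu$ lies in exactly one $A_\gamma$; the union is disjoint. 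The real content is then to count, for each fixed $\gamma$, the number of pairs $(\alpha,\beta)$ with $\alpha^u = \gamma$ and $\Tr(\beta) = \gamma$ independently, so that $\abs{A_\gamma} = N_x(\gamma) \cdot N_y(\gamma)$, where $N_x(\gamma) = \abs{\{\alpha : \alpha^u = \gamma\}}$ and $N_y(\gamma) = \abs{\{\beta : \Tr(\beta) = \gamma\}}$.

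First I would count the $y$-side. Because $\Tr\colon \fqs \to \fq$ is a surjective $\fq$-linear map (a standard fact for the trace of a finite field extension), its kernel has size $q^{s-1}$ and every fiber $\{\beta : \Tr(\beta) = \gamma\}$ is a coset of the kernel. Hence $N_y(\gamma) = q^{s-1}$ for \emph{every} $\gamma \in \fq$, including $\gamma = 0$. Next I would count the $x$-side. The map $\alpha \mapsto \alpha^u$ on $\fqs^*$ is a group homomorphism whose image is the subgroup of $u$-th powers; since $u \mid \frac{q^s-1}{q-1} \mid q^s-1$, this homomorphism is $u$-to-one onto its image. For $\gamma \in \fq^*$ one checks that $\gamma$ is indeed a $u$-th power in $\fqs^*$ (it lies in the image precisely because $\gamma^{(q^s-1)/u} = 1$, which follows from $\gamma^{q-1}=1$ together with $u \mid \frac{q^s-1}{q-1}$), so $N_x(\gamma) = u$. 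For $\gamma = 0$ the only solution is $\alpha = 0$, giving $N_x(0) = 1$.

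Combining the two counts yields $\abs{A_0} = 1 \cdot q^{s-1} = q^{s-1}$ and $\abs{A_\gamma} = u \cdot q^{s-1} = u q^{s-1}$ for $\gamma \in \fq^*$, which is exactly the claim. The main obstacle I anticipate is the $x$-side surjectivity argument, namely verifying carefully that every $\gamma \in \fq^*$ actually arises as $\alpha^u$ for some $\alpha \in \fqs^*$ and that the fiber has size exactly $u$; both hinge on the divisibility $u \mid \frac{q^s-1}{q-1}$ and on the order of the cyclic group $\fqs^*$. Concretely, writing $\fqs^* = \langle g \rangle$ with $g$ of order $q^s-1$, the subgroup of $u$-th powers has index $u$ and order $\frac{q^s-1}{u}$, while $\fq^* = \langle g^{(q^s-1)/(q-1)} \rangle$ has order $q-1$; the divisibility $u \mid \frac{q^s-1}{q-1}$ is precisely what guarantees $\fq^* \subseteq (\fqs^*)^u$, so that each nonzero $\gamma$ has a nonempty preimage of uniform size $u$. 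Once this inclusion is established, the product formula gives the stated cardinalities with no further difficulty.
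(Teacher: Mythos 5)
Your proof is correct, and it is essentially the intended argument: note that the paper itself does not prove Lemma~\ref{puntos} at all (it is imported from \cite{decreasingnormtrace}), but the product structure you exploit, $A_\gamma=\{\alpha\in\fqs:\alpha^u=\gamma\}\times\{\beta\in\fqs:\Tr(\beta)=\gamma\}$, is exactly what the paper records in Remark~\ref{rempuntos}. Your two fiber counts are both sound: surjectivity of the $\fq$-linear map $\Tr$ gives fibers of size $q^{s-1}$ (cosets of the kernel) for every $\gamma\in\fq$, and the divisibility $u\mid\frac{q^s-1}{q-1}$ does double duty on the $x$-side, giving $\gamma^{(q^s-1)/u}=\bigl(\gamma^{q-1}\bigr)^{(q^s-1)/(u(q-1))}=1$ for $\gamma\in\fq^*$ (so $\gamma$ is a $u$-th power in $\fqs^*$) and a kernel of size $\gcd(u,q^s-1)=u$ for the $u$-th power map, hence fibers of size exactly $u$; the fiber over $0$ is $\{0\}$.

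One parenthetical claim in your opening is false, though harmlessly so: you assert that $\alpha^u\in\fq$ for \emph{every} $\alpha\in\fqs$ on the grounds that $\alpha^u$ is a power of $\N(\alpha)$. The divisibility runs the other way: writing $\frac{q^s-1}{q-1}=uw$, one has $\N(\alpha)=(\alpha^u)^w$, so it is the norm that is a power of $\alpha^u$, and in general $\alpha^u\notin\fq$ (take $u=1$ and $s\geq 2$, where $\alpha^u=\alpha$ ranges over all of $\fqs$). This does not create a gap, because the decomposition $\xu=\bigcup_{\gamma\in\fq}A_\gamma$ only needs the common value $\alpha^u=\Tr(\beta)$ to lie in $\fq$ for points \emph{on the curve}, and that already follows from $\Tr(\fqs)\subseteq\fq$ alone, which you state first. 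Simply delete the parenthetical and the proof stands as written.
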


\begin{rem}\label{rempuntos}
For each $\gamma \in \fq$, we have that
$$
A_\gamma=\{\alpha\in \fqs : \alpha^u=\gamma\} \times \{\beta\in \fqs : \Tr(\beta)=\gamma\}.
$$
Fix $\gamma\in \fq$. For each $\alpha$ with $\alpha^u=\gamma$, there are exactly $q^{s-1}$ points in $\xu$ such that its first coordinate is $\alpha$. For each $\beta$ such that $\Tr(\beta)=\gamma$, there are exactly $u$ points in $\xu$ such that its second coordinate is $\beta$ if $\gamma\neq 0$, and one point if $\gamma=0$. 
\end{rem}

The {\it vanishing ideal} of a set of points is defined by the set of all polynomials that vanish on every point. Consider the vanishing ideal $I_{\xu}:=I(\xu)=(\Tr(y)-x^u,x^{q^s}-x,y^{q^s}-y)$ associated with the extended norm-trace curve. From \cite[Prop. 2.2]{decreasingnormtrace}, we have 
\begin{equation}\label{eq:vanishing}
I_{\xu}=(\Tr(y)-x^u,x^{u(q-1)+1}-x).
\end{equation}
We define the weight $w(x^ay^b)$ of the monomial $x^ay^b$ as
$$
w(x^ay^b)=aq^{s-1}+ub.
$$
The weight of a polynomial is the maximum weight of the monomials that appears in the support of the polynomial. We consider the weighted degree lexicographic ordering given by the following:
$$
x^ay^b < x^{a'}y^{b'} \iff w(x^ay^b)<w(x^{a'}y^{b'}), \text{ or } w(x^ay^b)=w(x^{a'}y^{b'}) \text{ and } b<b'.
$$
For this weighted degree lexicographic order, in~\cite{decreasingnormtrace}, the authors prove that the generators from Equation (\ref{eq:vanishing}) form a Gröbner basis, and therefore
\begin{equation}\label{eq:initial}
\ini(I_{\xu})=(y^{q^{s-1}},x^{u(q-1)+1}).
\end{equation}
Given $f\in \fqs[x,y]$, we denote
$$
V_\xu(f):=\{ P\in \xu \mid f(P)=0\}.
$$
We have that 
$$
\abs{V_\xu(f)}=\abs{\Delta(I_\xu,f)}=\abs{\Delta(\ini(I_\xu,f))}.
$$
The second equality is always true. The first one follows from \cite[Prop 6.2.12]{kreuzer3}. Assume $\ini(f)=x^ay^b$, with $0\leq a\leq u(q-1)$ and $0\leq b\leq q^{s-1}-1$. Following the proof of \cite[Prop. 4]{geilbezout}, let $f=x^ay^b+f'$ and $g=\Tr(y)-x^u=y^{q^{s-1}}-x^u+g'$, with $w(f')<aq^{s-1}+bu$ and $w(g')<uq^{s-1}$. Then we can compute the following $S$-polynomial:
$$
S(g,f)=x^ag-y^{q^{s-1}-b}f=-x^{a+u}+x^ag'-y^{q^{s-1}-b}f'.
$$
We have $\ini(S(g,f))=x^{a+u}$ because $w(x^ag')<a+uq^{s-1}$ and $w(y^{q^{s-1}-b}f')<(a+u)q^{s-1}$, whereas $w(x^{a+u})=(a+u)q^{s-1}$. Therefore, we have $(\ini(I_\xu),x^ay^b,x^{a+u}) \subset\ini(I_\xu,f)$, which implies 
$$
\abs{V_\xu(f)}=\abs{\Delta(\ini(I_\xu,f))}\leq \abs{\Delta(\ini(I_\xu),x^ay^b,x^{a+u})}.
$$
Using Equation (\ref{eq:initial}), we can conclude that 
$$
\abs{V_\xu(f)}\leq\abs{ \Delta(y^{q^{s-1}},x^{\min\{a+u, u(q-1)+1\}},x^{a}y^{b})}. 
$$

Let $F=\{f_1,\dots,f_r\}$ be a set of linearly independent polynomials in $\fqs[x,y]/I_\xu$. Let $\ini(f_i)=x^{a_i}y^{b_i}$, for $1\leq i \leq r$, with $a_1\leq \cdots \leq a_r$. We can assume that $(a_i,b_i)\neq (a_j,b_j)$ for $i\neq j$. We define $$F_{\ini}:=\{ \ini(f): f\in F\}=\{ x^{a_i}y^{b_i}: 1\leq i \leq r\}.$$ Arguing as above, we can get the bound
\begin{equation}\label{lowerbound}
\abs{V_\xu(F)}\leq \abs{\Delta(\{y^{q^{s-1}},x^{\min\{a_1+u, u(q-1)+1\}}\} \cup F_\ini)},
\end{equation}
where $V_\xu(F)$ is the set of common zeroes of $F$ in $\xu$. Since this will be the main object of study, we denote
$$
\Delta^*(F):=\Delta(\{y^{q^{s-1}},x^{\min\{a_1+u, u(q-1)+1\}}\} \cup F_\ini).
$$
As the number of common zeroes of $F$ is related to the support of the subcode $D$ generated by the evaluation of the polynomials in $F$, Inequality~(\ref{lowerbound}) gives the following bound for the support of this subcode:
\begin{equation}\label{eq:boundsupp}
\abs{\supp(D)}\geq \abs{\xu}-\abs{\Delta^*(F)}.
\end{equation}

\section{Decreasing norm-trace codes} \label{sec:GHW_dec}

Let $\mathcal{L}(\M) \subseteq \fqs[x,y]$ be the $\fqs$-subspace generated by a set of monomials $\mathcal{M} \subseteq \fqs[x,y]$ that is closed under divisibility. Let $\left\{P_1,\ldots,P_n \right\} \subseteq \fqs^2$ be the rational points  of the extended norm-trace curve $\cX_u$ defined over the finite field $\F_{q^s}$ by the equation
\[
x^{u} = y^{q^{s - 1}} + y^{q^{s - 2}} + \cdots + y,
\]
where $n=u(q-1)q^{s-1}+q^{s-1}$ and $u$ is a positive integer such that $u \mid \frac{q^s - 1}{q - 1}$. Recall a decreasing norm-trace code is defined by
\[ {\rm ev}(\mathcal{M}) = \{ \left(f(P_1),\ldots,f(P_n)\right) : f \in \mathcal{L}(\M) \}.\]

In this section, we determine the weight hierarchy of decreasing norm-trace codes by proving that the footprint-like bound given in Inequality~(\ref{lowerbound}) is always attained with these codes. We start with a lemma that provides a formula for the number of elements in the footprint.

\begin{lem}\label{lemafootprint}
Let $r\geq 1$ and $Z=\{x^{a_i}y^{b_i}:1\leq i\leq r\}\subset \Delta(y^{q^{s-1}},x^{u(q-1)+1})$ be such that $a_1<\cdots <a_r$, $b_1>\cdots > b_r$. Additionally, assume that $a_r-a_1<u$. Let $v=\min\{a_1+u, u(q-1)+1\}$. Then 
$$
\abs{\Delta^*(Z)}=\abs{\Delta(\{y^{q^{s-1}},x^{v}\} \cup Z)}=a_1q^{s-1}+b_r(v-a_1)+\sum_{i=1}^{r-1} (a_{i+1}-a_i)(b_i-b_r).
$$
\end{lem}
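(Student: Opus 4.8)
The plan is to reduce the computation to counting lattice points. By definition, $\Delta^*(Z) = \Delta(\{y^{q^{s-1}}, x^v\} \cup Z)$ consists of those monomials $x^a y^b$ that are not divisible by any generator of the monomial ideal $(\{y^{q^{s-1}}, x^v\} \cup Z)$. Since all the generators are monomials, this means exactly $0 \le a \le v-1$, $0 \le b \le q^{s-1}-1$, and, for every $i$, the monomial $x^{a_i}y^{b_i}$ does not divide $x^a y^b$, i.e. $a < a_i$ or $b < b_i$. Geometrically, these are the lattice points of the box $[0,v-1]\times[0,q^{s-1}-1]$ lying below the staircase determined by the antichain $\{(a_i,b_i)\}$, and I would evaluate $\abs{\Delta^*(Z)}$ by slicing the box into vertical strips, fixing $a$ and counting the admissible $b$.

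Fix $a$ with $0 \le a \le v-1$ and let $j = j(a)$ be the number of indices $i$ with $a_i \le a$. Because $a_1 < \cdots < a_r$, these indices are precisely $\{1,\ldots,j\}$, and because $b_1 > \cdots > b_r$, the smallest of the corresponding thresholds is $b_j$. Hence $x^a y^b$ is divisible by some generator of $Z$ exactly when $b \ge b_j$, so the strip contributes $b_j$ admissible values when $j \ge 1$, and the full $q^{s-1}$ values when $j = 0$ (that is, when $a < a_1$). Grouping the strips according to the value of $j$ then gives
\[
\abs{\Delta^*(Z)} = a_1 q^{s-1} + \sum_{i=1}^{r-1}(a_{i+1}-a_i)\,b_i + (v-a_r)\,b_r,
\]
where the first term accounts for the strips $0 \le a < a_1$, the middle sum for $a_i \le a < a_{i+1}$, and the last term for $a_r \le a \le v-1$.

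Before grouping, I must verify that every strip block is well defined, which amounts to checking $a_r < v$ so that the final block $\{a_r,\ldots,v-1\}$ has nonnegative length. This is exactly where the hypothesis $a_r - a_1 < u$ is used: if $v = a_1 + u$, then $a_r < a_1 + u = v$; and if $v = u(q-1)+1$, then the assumption $Z \subset \Delta(y^{q^{s-1}}, x^{u(q-1)+1})$ forces $a_r \le u(q-1) < v$. In both cases $a_r < v$, so all blocks are valid. I expect this boundary bookkeeping, together with the correct identification of the binding threshold $b_j$ in each strip, to be the only genuinely delicate point; everything else is elementary.

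Finally, I would reconcile the strip-count formula with the stated closed form by a telescoping simplification. Writing $b_i = (b_i - b_r) + b_r$ in the middle sum and using $\sum_{i=1}^{r-1}(a_{i+1}-a_i) = a_r - a_1$, the expression $\sum_{i=1}^{r-1}(a_{i+1}-a_i)\,b_i + (v-a_r)\,b_r$ becomes $\sum_{i=1}^{r-1}(a_{i+1}-a_i)(b_i - b_r) + b_r\big((a_r-a_1)+(v-a_r)\big) = \sum_{i=1}^{r-1}(a_{i+1}-a_i)(b_i-b_r) + (v-a_1)b_r$. Substituting this back yields precisely $a_1 q^{s-1} + b_r(v-a_1) + \sum_{i=1}^{r-1}(a_{i+1}-a_i)(b_i-b_r)$, which is the claimed formula.
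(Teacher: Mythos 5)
Your proof is correct, but it organizes the count differently from the paper. The paper decomposes the region $\Delta(\{y^{q^{s-1}},x^{v}\}\cup Z)$ into two pieces matching the two parts of the formula: an L-shaped region $\Delta(y^{q^{s-1}},x^{v},x^{a_1}y^{b_r})$, which contributes $a_1q^{s-1}+b_r(v-a_1)$, together with a translated staircase $x^{a_1}y^{b_r}\cdot\Delta(Z')$ where $Z'=\{x^{a_i-a_1}y^{b_i-b_r}:1\leq i\leq r\}$, which contributes $\sum_{i=1}^{r-1}(a_{i+1}-a_i)(b_i-b_r)$; the closed form then drops out with no algebraic rearrangement. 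You instead sweep over vertical strips $a=\text{const}$, identify the binding threshold $b_j$ in each strip, obtain $a_1q^{s-1}+\sum_{i=1}^{r-1}(a_{i+1}-a_i)b_i+(v-a_r)b_r$, and recover the stated formula by telescoping. The two bookkeepings are equivalent, and your version has the merit of being fully self-contained: you verify from first principles what the paper dismisses as ``easy to obtain,'' and you make explicit exactly where the hypothesis $a_r-a_1<u$ enters (guaranteeing $a_r<v$, i.e., that the last strip block is nonempty), which is the same one-line observation the paper opens with. The cost is the extra rearrangement step at the end; the paper's region decomposition buys a formula-shaped proof at the price of leaving the two sub-counts unjustified in detail.
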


\begin{proof}
As $a_r-a_1<u$, we have $a_i<v$ for $i=1,\dots,r$. We have the decomposition
$$
\Delta(\{y^{q^{s-1}},x^{\min\{a_1+u, u(q-1)+1\}}\} \cup Z)=\Delta(y^{q^{s-1}},x^v,x^{a_1}y^{b_r})\cup (x^{a_1}y^{b_r}\cdot \Delta(Z')),
$$
where $Z'=\{x^{a_i-a_1}y^{b_i-b_r},1\leq i\leq r\}$ (see Figure \ref{f:fp}). It is easy to obtain that
$$
\abs{\Delta(y^{q^{s-1}},x^v,x^{a_1}y^{b_r})}=a_1q^{s-1}+b_r(v-a_1).
$$
Finally, we have that
$$
\abs{\Delta(Z')}=\sum_{i=1}^{r-1} (a_{i+1}-a_i)(b_i-b_r).
$$
This completes the proof.
\end{proof}

\begin{figure}[hbt!]
\includegraphics[width=0.5\textwidth]{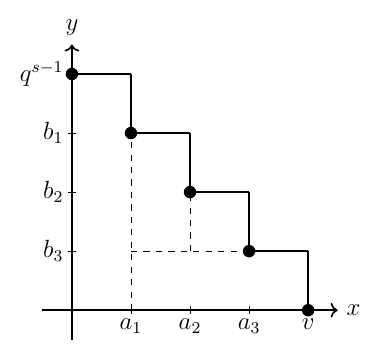}
\caption{Example of the decomposition of the footprint used in the proof of Lemma~\ref{lemafootprint} with $r=3$.}\label{f:fp}
\end{figure}

We present now the main result of this section, in which we show how to construct sets of polynomials attaining the footprint-like bound in Inequality~(\ref{lowerbound}).

\begin{thm}\label{T:fpsharp}
Let $\M\subset \Delta(y^{q^{s-1}},x^{u(q-1)+1})$ be a decreasing set and $1\leq r\leq \abs{\M}$.
Let $\mathcal{N}_r:=\{ N=\{M_1,\dots,M_r\}\subset\M: M_i\neq M_j \text{ for } i\neq j \}$.
Then, the $r^{th}$ generalized Hamming weight of the decreasing extended norm-trace code is 
$$
d_r(\ev(\M))=\abs{\xu}-\max\left\{\abs{\Delta^*(N)}, N \in \mathcal{N}_r\right\}. 
$$
\end{thm}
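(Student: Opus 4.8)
The plan is to prove the two inequalities separately. The lower bound $d_r(\ev(\M))\ge \abs{\xu}-\max\{\abs{\Delta^*(N)}:N\in\mathcal{N}_r\}$ will follow almost immediately from the footprint machinery already set up, while the matching upper bound will require exhibiting an explicit $r$-dimensional subcode that attains it. As a preliminary observation I would record that, since $\M\subset\Delta(y^{q^{s-1}},x^{u(q-1)+1})=\Delta(\ini(I_\xu))$, the monomials of $\M$ are standard monomials modulo $I_\xu$; hence $\ev$ restricts to an isomorphism $\mathcal{L}(\M)\cong\ev(\M)$, so $r$-dimensional subcodes correspond exactly to $r$-dimensional subspaces $D\subseteq\mathcal{L}(\M)$, and $\abs{\supp(\ev(D))}=\abs{\xu}-\abs{V_\xu(D)}$, where $V_\xu(D)$ is the common zero locus of $D$ on $\xu$.

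For the lower bound I would take an arbitrary such $D$ and reduce a basis by Gaussian elimination with respect to the monomial order to obtain $F=\{f_1,\dots,f_r\}$ with pairwise distinct leading monomials. Because every element of $\mathcal{L}(\M)$ is supported on $\M$, the set $F_\ini$ of leading monomials lies in $\M$, so $F_\ini\in\mathcal{N}_r$. Inequality~(\ref{lowerbound}) then gives $\abs{V_\xu(D)}=\abs{V_\xu(F)}\le\abs{\Delta^*(F_\ini)}\le\max_{N\in\mathcal{N}_r}\abs{\Delta^*(N)}$, and minimizing $\abs{\supp}$ over all $D$ yields the inequality.

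The substance of the proof is the reverse inequality: for a maximizer $N^*$ I must produce an $r$-dimensional $D$ with $\abs{V_\xu(D)}\ge\abs{\Delta^*(N^*)}$ (equality is then forced by~(\ref{lowerbound})). I would first reduce to the case where $N^*$ is a staircase $\{x^{a_i}y^{b_i}:1\le i\le r\}$ with $a_1<\cdots<a_r$, $b_1>\cdots>b_r$ and $a_r-a_1<u$: the footprint $\Delta^*(N^*)$ is governed by the minimal, non-redundant elements of $\{y^{q^{s-1}},x^{v}\}\cup N^*$, which form such a staircase, and any redundant monomial $M_i$ can be lifted by writing $f_i=x^cy^d f_j$ for a suitable $M_j$ in the staircase, which preserves the leading term $M_i$, keeps $f_i\in\mathcal{L}(\M)$ (as $\M$ is closed under divisibility), and vanishes wherever $f_j$ does. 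For the staircase itself I would fix $\gamma_0\in\fq^*$ and set $f_i=\bigl(\prod_{j=1}^{a_1}(x-\alpha'_j)\bigr)\prod_{j=1}^{a_i-a_1}(x-\alpha_j)\prod_{l=1}^{b_i}(y-\beta_l)$, choosing the $\alpha'_j$ with $(\alpha'_j)^u\neq\gamma_0$ (so they cut out $a_1$ full $x$-fibers disjoint from the rest) and choosing $\alpha_j^u=\gamma_0$, $\Tr(\beta_l)=\gamma_0$, so that every pair $(\alpha_j,\beta_l)$ lies on $\xu$. By construction $\ini(f_i)=x^{a_i}y^{b_i}=M_i$, each $f_i$ has support dividing $M_i\in\M$ hence lies in $\mathcal{L}(\M)$, and the distinct leading terms make $f_1,\dots,f_r$ independent.

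Finally I would count $V_\xu(\{f_i\})$ using Lemma~\ref{puntos} and Remark~\ref{rempuntos}: the points with $x$-coordinate among the $\alpha'_j$ give $a_1q^{s-1}$ common zeros, disjoint from the $\gamma_0$-block, while inside the $\gamma_0$-block a point $(\alpha,\beta)$ is a common zero exactly when, for every $i$, either $\alpha\in\{\alpha_1,\dots,\alpha_{a_i-a_1}\}$ or $\beta\in\{\beta_1,\dots,\beta_{b_i}\}$; summing these contributions over the nested structure reproduces precisely $b_r(v-a_1)+\sum_{i=1}^{r-1}(a_{i+1}-a_i)(b_i-b_r)$, so the total matches $\abs{\Delta^*(N^*)}$ via the decomposition of Lemma~\ref{lemafootprint}. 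I expect the main obstacle to be exactly this count: arranging the norm and trace values so that the $x$-fibers and $y$-fibers meet only where intended and combine correctly inside the $\gamma_0$-block, together with the separate treatment of the boundary regime $v=u(q-1)+1$, where the $a_1$ base fibers can no longer avoid the $\gamma_0$-fiber and the coordinates (and possibly the role of the norm- and trace-zero points) must be chosen differently.
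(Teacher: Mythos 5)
Your strategy coincides with the paper's: the lower bound via Gaussian elimination, the inclusion $F_\ini\subset\M$, and Inequality~(\ref{lowerbound}); the upper bound via explicit products of linear factors whose $x$-coordinates are split between a fixed fiber $\{\alpha:\alpha^u=\gamma_0\}$ and coordinates avoiding it, with $y$-coordinates from $\{\beta:\Tr(\beta)=\gamma_0\}$. Your staircase construction and count are exactly the paper's Case~(1.1). The first genuine gap is in your reduction to the staircase case. You assert that every redundant monomial $M_i$ of $N^*$ is divisible by a suitable staircase element $M_j$, so that $f_i=x^cy^df_j$ works. That is false: $M_i$ can be redundant in $\Delta^*(N^*)$ solely because it is a multiple of the auxiliary monomial $x^{a_1+u}$, which enters the definition of $\Delta^*$ but need not belong to $N^*$, so no polynomial in your family has it as leading monomial. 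For a concrete instance take $N^*=\{xy^2,\,x^{1+u}y\}$ with $q\geq 3$: here $a_1=1$, the staircase is $\{xy^2\}$, and no monomial multiple of a polynomial with leading term $xy^2$ can have leading term $x^{1+u}y$, since its $y$-degree is at least $2$. The paper devotes Case~(1.2.2) to exactly this situation, and it requires a new idea: the polynomial $(x^u-\gamma)\prod_{i=1}^{a_1}(x-\alpha_i)\,(x-\alpha_1)^{a-a_1-u}(y-\beta_1)^{b}$, whose factor $x^u-\gamma$ kills the entire $\gamma$-fiber and hence all remaining common zeros of $F'$. This configuration cannot be dodged by choosing a nicer maximizer: Lemma~\ref{L:derecha} (whose proof relies on this very case) shows that a maximizing $N$ must contain $x^{a_1+u}$ whenever it lies in $\M$, and when $r$ is close to $\abs{\M}$ there is no freedom to exchange monomials at all.

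The second gap is the regime $a_1\geq u(q-2)+2$, equivalently $\min\{a_1+u,u(q-1)+1\}=u(q-1)+1$, which you flag as the expected obstacle but leave unresolved. There are only $u(q-2)+1$ elements $\alpha$ with $\alpha^u\in\fq\setminus\{\gamma_0\}$, so your $a_1$ base fibers avoiding the $\gamma_0$-block do not exist. The paper's Case~(2) instead takes all $u(q-2)+1$ of them together with $a_j-u(q-2)-1$ coordinates inside the $\gamma_0$-fiber for each $f_j$, and corrects the resulting double count (zeros having $\alpha$ in the $\gamma_0$-fiber and $\beta\in\{\beta_1,\dots,\beta_{b_r}\}$ would otherwise be counted twice), recovering the formula of Lemma~\ref{lemafootprint} with $v=u(q-1)+1$. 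Since the maximum in the statement runs over all of $\mathcal{N}_r$, both cases must be handled; as written, your argument establishes the theorem only for maximizers that are staircases with $a_1\leq u(q-2)+1$, possibly enlarged by monomials divisible by staircase elements.
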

\begin{proof}
We consider $N=\{x^{a_i}y^{b_i}, 1\leq i\leq r\}\subset \M$, with $a_1\leq \cdots \leq a_r$. We will find a set of polynomials $F$ such that $N=F_\ini$ and
$$
\abs{V_\xu(F)}=\abs{\Delta^*(F)}=\abs{\Delta^*(F_\ini)}=\abs{\Delta^*(N)},
$$
which proves the result by also considering Inequality~(\ref{eq:boundsupp}). We now divide the proof into the following cases:
$$
\begin{cases}
\text{(1)} & a_1\leq u(q-2)+1
\begin{cases}
\text{(1.1)} & N \text{ as in Lemma~\ref{lemafootprint}}\\
\text{(1.2)} & N \text{ not as in Lemma~\ref{lemafootprint}}
\begin{cases}
\text{(1.2.1)} & x^{a_j}y^{b_j} | x^ay^b\\
\text{(1.2.2)} & x^{a_1+u} | x^ay^b\\
\end{cases}\\
\end{cases}\\
\text{(2)} & a_1\geq u(q-2)+2
\begin{cases} \text{(2.1)} & N \text{ as in Lemma~\ref{lemafootprint}}\\
\text{(2.2)} & N \text{ not as in Lemma~\ref{lemafootprint}}\end{cases}
\\
\end{cases}
$$
Case (1) Assume $a_1+u\leq u(q-1)+1$, i.e., $a_1\leq u(q-2)+1$.
\newline
Case (1.1) Consider the case when $N$ is as in Lemma~\ref{lemafootprint}. In particular, $a_r-a_1<u$. Fix $\gamma \in \fq^*$. In what follows, whenever we consider a collection of elements of $\fqs$, we may assume they are pairwise distinct. Consider elements $\alpha_1,\dots,\alpha_{a_1}$ such that $\alpha_i^u\neq \gamma$, for $1\leq i \leq a_1$. Note that we can do this because $a_1\leq u(q-2)+1$ and Remark \ref{rempuntos}. We also consider $\alpha'_1,\dots,\alpha'_{a_r-a_1}$ such that $(\alpha'_i)^u=\gamma$, for $1\leq i \leq a_r-a_1$ (recall $a_r-a_1<u$), and $\beta_1,\dots,\beta_{b_1}$ such that $\Tr(\beta_i)=\gamma$, for $1\leq i \leq b_1$. Take $F := \{f_1,\dots,f_r\}$, where
$$
f_j=\prod_{i=1}^{a_1}(x-\alpha_i)\prod_{i=1}^{a_j-a_1}(x-\alpha'_i)\prod_{i=1}^{b_j}(y-\beta_i)
$$
for $j=1,\dots,r$. By construction, any point $(\alpha,\beta)\in \xu$, with $\alpha=\alpha_i$, for some $1\leq i\leq a_1$, is a common zero of $F$. The same happens if $\beta=\beta_i$, for some $1\leq i\leq b_r$. These two sets of zeroes are disjoint since $\alpha_i^u\neq \gamma=\Tr(\beta_i)$. We obtain $a_1q^{s-1}+b_r u$ zeroes in this way. By Lemma \ref{puntos} and Remark \ref{rempuntos}, and by inspection of the polynomials in $F$, it is clear that any other common zero $(\alpha,\beta)$ of $F$ must have $\alpha=\alpha'_i$ for some $1\leq i \leq a_r-a_1$, and $\beta=\beta_i$ for some $b_r+1\leq i\leq b_1$. If $\alpha=\alpha'_i$, with $a_{j}-a_1+1\leq i\leq a_{j+1}-a_1$ for some $1\leq j\leq r-1$,
then $(\alpha,\beta)$ is a common zero of $\{f_{j+1},\dots,f_{r}\}$. As $(x-\alpha)$ is not a factor of the rest of the polynomials in $F$, we must have $\beta=\beta_\ell$ for some $b_{r+1}+1\leq \ell\leq b_j$. This gives $(a_{j+1}-a_j)(b_j-b_{r})$ zeroes, counting for all the possible $i$ with $a_j-a_1+1\leq i\leq a_{j+1}-a_1$ and $\ell$ with $b_{r+1}+1\leq \ell \leq b_j$. Considering the sum over all possible $j$, with $1\leq j \leq r-1$, and using Lemma \ref{lemafootprint}, we obtain that 
$$
\abs{V_\xu(F)}=\abs{\Delta^*(F)}=\abs{\Delta(\{y^{q^{s-1}},x^{\min\{a_1+u, u(q-1)+1\}}\} \cup F_\ini)}.
$$
Case (1.2) Assume that $N$ is not as in Lemma \ref{lemafootprint}. Then, when we consider the set
$$
\{y^{q^{s-1}},x^{a_1+u}\} \cup N,
$$
there are some monomials that are multiples of the others. Therefore, we can consider $N'\subset N$ such that $N'$ satisfies the conditions from Lemma \ref{lemafootprint}. Arguing as in Case~(1.1), we can find a set of $r'<r$ polynomials $F'$ such that $F'_\ini=N'$ and
$$
\abs{V_\xu(F')}=\abs{\Delta(\{y^{q^{s-1}},x^{a_1+u}\} \cup N')}=\abs{\Delta(\{y^{q^{s-1}},x^{a_1+u}\} \cup N)}.
$$
For each monomial $x^ay^b\in N\setminus N'$, we know that there is some $j$ such that $x^{a_j}y^{b_j}\in N'$ and $x^{a_j}y^{b_j}$ divides $x^ay^b$, or $x^{a_1+u}$ divides $x^{a}y^b$. 
\newline
Case (1.2.1) If $x^{a_j}y^{b_j}$ divides $x^ay^b$, we consider the polynomial
$$
f_{a,b}=f_j(x-\alpha_1)^{a-a_j}(y-\beta_1)^{b-b_j}.
$$
It is clear that 
\begin{equation}\label{eqceros}
\abs{V_\xu(F')}=\abs{V_\xu(F'\cup \{f_{a,b}\})}
\end{equation}
because we are just increasing the multiplicity of the zeroes of $f_j\in F'$, and $\ini(f_{a,b})=x^ay^b$. 
\newline
Case (1.2.2) If $x^{a_1+u}$ divides $x^ay^b$, we consider the polynomial
$$
f_{a,b}=(x^u-\gamma)\left(\prod_{i=1}^{a_1}(x-\alpha_i)\right)(x-\alpha_1)^{a-a_1-u}(y-\beta_1)^{b}.
$$
Here, Equation~(\ref{eqceros}) also holds because any common zero $(\alpha,\beta)$ of $F'$ has $\alpha=\alpha_i$ for some $1\leq i \leq a_1$, or $\alpha^u=\gamma$. We also have $\ini(f_{a,b})=x^ay^b$. 

In both cases (1.2.1) and (1.2.2), we obtain that, for any common zero $(\alpha,\beta)$ of $F'$, $(\alpha,\beta)$ is also a zero of $f_{a,b}$.
Therefore, applying this procedure for every $x^ay^b\in N\setminus N'$, we can obtain a set $F$ with
$$
\abs{V_\xu(F)}=\abs{V_\xu(F')}=\abs{\Delta^*(F)}
$$
and $F_\ini=N$.

\noindent Case (2) We assume $a_1\geq u(q-2)+2$. We fix $\gamma\in \fq^*$. As $\mathcal{M}\subset \Delta(y^{q^{s-1}},x^{u(q-1)+1})$, we have $a_i\leq u(q-1)$, for $1\leq i \leq r$, which implies $a_r-u(q-2)\leq u$. Hence, we can choose $\alpha_1,\dots,\alpha_{u(q-2)+1}$ such that $\alpha_i^u \neq \gamma$, for $1\leq i\leq u(q-2)+1$, and $\alpha'_1,\dots,\alpha'_{a_r-u(q-2)-1}$ such that $(\alpha'_i)^u=\gamma$, for $1\leq i \leq a_r-u(q-2)-1$. Note that, since $u(q-2)+2\leq a_1\leq a_r$, we have $a_r-u(q-2)-1\geq 1$. As before, we consider $\beta_1,\dots,\beta_{b_1}$ such that $\Tr(\beta_i)=\gamma$, for $1\leq i \leq b_1$.
\newline
Case (2.1) Assuming $N$ is as in Lemma \ref{lemafootprint}, we consider the polynomials
$$
f_j=\prod_{i=1}^{u(q-2)+1}(x-\alpha_i)\prod_{i=1}^{a_j-u(q-2)-1}(x-\alpha'_i)\prod_{i=1}^{b_j}(y-\beta_i),
$$
for $j=1,\dots,r$, and $F=\{f_1,\dots,f_r\}$. In order to count the number of common zeroes, we note that
$$
\prod_{i=1}^{u(q-2)+1}(x-\alpha_i)\prod_{i=1}^{a_1-u(q-2)-1}(x-\alpha'_i)\prod_{i=1}^{b_r}(y-\beta_i)
$$
divides $f_j$, for $1\leq j \leq r$. This polynomial has $a_1q^{s-1}$ zeroes $(\alpha,\beta)$ with $\alpha=\alpha_i$ or $\alpha=\alpha_i'$, and $ub_r$ zeroes with $\beta=\beta_i$, for some $i$. If we consider these zeroes together, we are considering the zeroes with $\alpha=\alpha'_i$ twice since $(\alpha'_i)^u=\gamma$, which are $(a_1-u(q-2)-1)b_r$ zeroes. Thus, in this way we obtain $a_1q^{s-1}+(u(q-1)+1-a_1)b_r$ common zeroes of $F$. Looking at the rest of the factors of the $f_j$ and counting as before, we obtain $a_1q^{s-1}+(u(q-1)+1-a_1)b_r+\sum_{i=1}^{r-1}(a_{i+1}-a_i)(b_i-b_r)$ common zeroes in total. By Lemma \ref{lemafootprint}, we obtain that $\abs{V_\xu(F)}=\abs{\Delta^*(F)}$.
\newline
Case (2.2) If $N$ is not as in Lemma \ref{lemafootprint}, we can use the argument we used above with a subset $N'\subset N$ that satisfies the conditions in Lemma~\ref{lemafootprint}, and then increase the multiplicity of the zeroes of some of the polynomials.
\end{proof}

\begin{rem}
In Theorem \ref{T:fpsharp}, requiring $\M$ to be a decreasing set ensures that all the polynomials considered in the proof belong to $\mathcal{L}(\M)$. For a non-decreasing set, we would only obtain a lower bound for $d_r(\ev(\M))$. 
\end{rem}

If we wanted to compute $d_r(\ev(\M))$ directly using the definitions, usually this would require to consider $\mathcal{F}=\{ F=\{f_1,\dots,f_r\}\subset\mathcal{L}(\M ) \}$, check which sets of $\mathcal{F}$ are linearly independent, and compute $\Delta(I_\xu, F)$. Since 
$$
\abs{\mathcal{F}}=\binom{q^{s\abs{\M}}-1}{r},
$$
this becomes unfeasible in most cases. Using Theorem \ref{T:fpsharp}, we only need to check the value of $\Delta^*(N)$ for
$$
\abs{\mathcal{N}_r}=\binom{\abs{\M}}{r}
$$
sets of monomials. Moreover, the computation of $\Delta^*(N)$ itself is straightforward (e.g., using Lemma \ref{lemafootprint} and some variations of it).

In the next section, we show a way to find $d_r(\ev(\M))$ directly, without having to check the value of $\Delta^*(N)$ for every $N\in \mathcal{N}_r$, for some particular sets of monomials $\M$.

\section{Generalized Hamming weights of Reed-Muller-type codes over the norm-trace curve} \label{sec:GHW_RMlike}

Let $\left\{P_1,\ldots,P_n \right\} \subseteq \fqs^2$ be the rational points  of the extended norm-trace curve $\cX_u$ defined over the finite field $\F_{q^s}$ by the equation
\[
x^{u} = y^{q^{s - 1}} + y^{q^{s - 2}} + \cdots + y,
\]
where $n=u(q-1)q^{s-1}+q^{s-1}$ and $u$ is a positive integer such that $u \mid \frac{q^s - 1}{q - 1}$. Let $0\leq d\leq u(q-1)(q^{s-1}-1)$. The set of bivariate monomials of degree less than or equal to $d$ that are contained in $\Delta(y^{q^{s-1}},x^{u(q-1)+1})$ is denoted by
$$\M_{\leq d}:= \F_{q^s}[x,y]_{\leq d}\cap \Delta(y^{q^{s-1}},x^{u(q-1)+1}).$$
Let $\mathcal{L}(\Md)$ be the $\fqs$-subspace generated by $\Md$. The Reed-Muller-type code over the norm-trace curve is given by
\[ {\rm ev}(\Md) = \{ \left(f(P_1),\ldots,f(P_n)\right) : f \in \mathcal{L}(\Md) \}.\]

Let $\Cart_{u}(d)$ be an affine Cartesian code \cite{hiramaffinecartesian} obtained by evaluating $\mathcal{L}(\Md)$ at the points of a Cartesian set $Z_1\times Z_2 \subseteq \fqs \times \fqs$, where $\abs{Z_1}=u(q-1)+1$ and $\abs{Z_2}=q^{s-1}$. Note that the length and dimension of $\Cart_u(d)$ are the same as the Reed-Muller-type code over the norm-trace curve $\ev(\M_{\leq d})$. We start with a technical lemma that we will use later.

\begin{lem}\label{L:derecha}
Let $\M\subset \Delta(y^{q^{s-1}},x^{u(q-1)+1})$ be a decreasing set with $1\leq r\leq \abs{\M}$. Consider $N\subset \M$ with $\abs{N}=r$ such that 
$$
\abs{\Delta^*(N)}=\max\left\{ \abs{\Delta^*(M')}: M'\subset \M,\abs{M'}=r\right\},
$$
and let $a_1$ be the lowest power of $x$ appearing in any monomial in $N$. Then either $x^{a_1+u} \in N $ or $x^{a_1+u}\not \in \M$.
\end{lem}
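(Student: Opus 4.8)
The plan is to argue by contradiction through an exchange argument. First I would dispose of the trivial alternative: if $x^{a_1+u}\notin\M$, the conclusion holds, so assume $x^{a_1+u}\in\M$. Since $\M\subseteq\Delta(y^{q^{s-1}},x^{u(q-1)+1})$, every monomial of $\M$ has $x$-degree at most $u(q-1)$; hence $a_1+u\le u(q-1)$, so $a_1\le u(q-2)$ and $v:=\min\{a_1+u,u(q-1)+1\}=a_1+u$. In particular $x^{a_1+u}=x^{v}$, the first column lying just outside the footprint ``box''. Suppose, for contradiction, that $x^{a_1+u}\notin N$.

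I would then view $\Delta^*(N)=\Delta(\{y^{q^{s-1}},x^{v}\}\cup N)$ as the lattice box $[0,v-1]\times[0,q^{s-1}-1]$ with the up-sets of the monomials of $N$ removed, recording for each column $j$ the height of the surviving part. Columns $0,\dots,a_1-1$ are never cut, while column $a_1$ survives exactly up to the height $\beta:=\min\{b:x^{a_1}y^{b}\in N\}$ of the lowest generator $M_0:=x^{a_1}y^{\beta}$ sitting in the minimal column; note $\beta\le q^{s-1}-1$ because $M_0\in\M$. The exchange I propose is
\[
N':=(N\setminus\{M_0\})\cup\{x^{a_1+u}\}.
\]
This is again an $r$-element subset of $\M$ (indeed $x^{a_1+u}\in\M$, and $M_0\neq x^{a_1+u}\notin N$), so $N'\in\mathcal{N}_r$ and it suffices to prove $\abs{\Delta^*(N')}>\abs{\Delta^*(N)}$.

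The heart of the argument is a column-by-column comparison, turning on the single observation that the new generator $x^{a_1+u}=x^{v}$ has up-set $\{a\ge v\}$, so it affects only columns of index $\ge v$, none of which belong to the old box $[0,v-1]\times[0,q^{s-1}-1]$. I would split into two cases according to whether $M_0$ is the unique generator of $N$ in column $a_1$. If it is not, the minimal $x$-degree, hence $v$, is unchanged; removing $M_0$ raises the surviving height of column $a_1$ from $\beta$ to the height $\beta'>\beta$ of the next generator there, while no column of the box loses survivors, giving a strict gain of at least $\beta'-\beta$. If $M_0$ is the unique generator in column $a_1$, removing it empties that column (now fully surviving, a strict gain of $q^{s-1}-\beta\ge 1$) and raises the minimal $x$-degree, so the new width $v'$ satisfies $v'>v$; on the overlap $[0,v-1]$ the footprint can only grow, since removing $M_0$ never costs a survivor and $x^{a_1+u}$ cuts nothing there, while the new columns $v,\dots,v'-1$ contribute nonnegatively. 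In both cases $\abs{\Delta^*(N')}>\abs{\Delta^*(N)}$, contradicting the maximality of $N$, so $x^{a_1+u}\in N$.

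The step I expect to be the main obstacle is precisely this second case, where the exchange changes the minimal $x$-degree and hence the box width $v$: the accounting closes cleanly only because $x^{a_1+u}=x^{v}$ lies exactly on the right edge of the old box, so introducing it removes no survivor counted in $\abs{\Delta^*(N)}$, and because $M_0\in\M$ forces $\beta\le q^{s-1}-1$, which guarantees the strict increase $q^{s-1}-\beta\ge 1$ upon freeing column $a_1$.
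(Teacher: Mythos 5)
Your proof is correct, but it takes a genuinely different route from the paper's. Both arguments pivot on the same key observation: since $x^{a_1+u}\in\M$ forces $a_1+u\le u(q-1)$, the monomial $x^{a_1+u}=x^{v}$ coincides with the column bound already present among the generators defining $\Delta^*(N)$, so it is ``free''. The paper exploits this by \emph{augmenting} $N$: it notes $\abs{\Delta^*(N)}=\abs{\Delta^*(N\cup\{x^{a_1+u}\})}$, invokes the constructive part of Theorem~\ref{T:fpsharp} to realize this footprint with $r+1$ polynomials in $\mathcal{L}(\M)$, and thus deduces $d_{r+1}(\ev(\M))\le d_r(\ev(\M))$, contradicting the strict monotonicity of generalized Hamming weights (Theorem~\ref{T:monotonicity}). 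You instead stay at cardinality $r$ and perform an \emph{exchange}: replace the lowest generator $M_0=x^{a_1}y^{\beta}$ of the minimal column by $x^{a_1+u}$, and verify by a column-by-column staircase count (splitting on whether $M_0$ is the unique generator in column $a_1$, and using $\beta\le q^{s-1}-1$, the fact that only generators in column $a_1$ can cut that column, and that $x^{v}$ cuts nothing inside the box) that $\abs{\Delta^*(N')}>\abs{\Delta^*(N)}$, contradicting maximality directly. Your argument buys self-containedness: it is a purely combinatorial statement about monomial ideals, using neither the realizability theorem nor Wei's monotonicity, and it yields the slightly stronger quantitative conclusion that a violating $N$ is strictly suboptimal. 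The paper's proof buys brevity: given the machinery already established in Section~3, it avoids all case analysis.
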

\begin{proof}
We can assume $N=\{x^{a_i}y^{b_i}, 1\leq i\leq r\}\subset \M$, with $a_1\leq \cdots \leq a_r$. 
Assume $x^{a_1+u}\not \in N$ and $x^{a_1+u} \in \M$. By the definition of $\Delta^*(N)$, we have that $\abs{\Delta^*(N)}=\abs{\Delta^*(N\cup \{x^{a_1+u}\})}$. Due to the proof of Theorem~\ref{T:fpsharp}, we can find a set of polynomials $F\subset \mathcal{L}(\M)$ such that $F_{\ini}=N\cup \{x^{a_1+u}\}$ and 
$$
\abs{V_{\xu}(F)}=\abs{\Delta^*(N\cup \{x^{a_1+u}\})}=\abs{\Delta^*(N)}.
$$
This implies $d_{r+1}(\ev(\M))\leq \abs{\xu}-\abs{\Delta^*(N\cup \{x^{a_1+u}\})}=\abs{\xu}-\abs{\Delta^*(N)}=d_{r}(\ev(\M))$, where the last equality follows from the maximality of $\abs{\Delta^*(N)}$. This contradicts the monotonicity of the generalized Hamming weights from Theorem \ref{T:monotonicity}. 
\end{proof}

Let $M(r)$ denote the first $r$ elements of $\Md$ in descending lexicographic order with $x>y$. For the case with $\abs{Z_1}\leq \abs{Z_2}$, in \cite{BEELEN2018130} the authors prove that $M(r)$ is the set of monomials that maximizes $\abs{\Delta(y^{\abs{Z_2}},x^{\abs{Z_1}},M(r))}$. In the next result, we show that $M(r)$ also maximizes $\abs{\Delta^*(M(r))}$ if $u\neq \frac{q^s-1}{q-1}$. 

\begin{lem}\label{L:shadow}
Let $0\leq d\leq u(q-1)(q^{s-1}-1)$ and $N\subset \Md$ with $\abs{N}=r$. If $u\neq \frac{q^s-1}{q-1}$, then
$$
\abs{\Delta^*(M(r))}\geq \abs{\Delta^*(N)}.
$$
\end{lem}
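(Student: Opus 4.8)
The statement is purely combinatorial: we must show that the descending-lex initial segment $M(r)$ maximizes $\abs{\Delta^*(N)}$ among all $r$-subsets $N\subset\Md$. The plan is to analyze $\abs{\Delta^*(N)}$ one $x$-column at a time and to reduce the optimization, inside a strip of width at most $u$, to the affine Cartesian situation already settled in \cite{BEELEN2018130}. The first thing I would record is the only consequence of the hypothesis $u\neq\frac{q^s-1}{q-1}$ that the argument needs: $u$ is then a proper divisor of $\frac{q^s-1}{q-1}=1+q+\cdots+q^{s-1}$, hence at most half of it, and since $\frac{q^s-1}{q-1}<2q^{s-1}$ we get $u<q^{s-1}$.

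Writing $N=\{x^{a_i}y^{b_i}\}$ with $a_1=\min_i a_i$ and $v=\min\{a_1+u,\,u(q-1)+1\}$, I would decompose $\Delta^*(N)$ by columns. For $0\leq a\leq v-1$ the number of exponents $b$ with $x^ay^b\in\Delta^*(N)$ is $\min\{c(a),q^{s-1}\}$, where $c(a):=\min\{b_i:a_i\leq a\}$ is the non-increasing blocking profile of $N$; this gives
\[
\abs{\Delta^*(N)}=a_1q^{s-1}+\sum_{a=a_1}^{v-1}c(a),
\]
generalizing Lemma~\ref{lemafootprint}. The first term counts the $a_1$ full left columns, each worth $q^{s-1}$, and the sum runs over the \emph{active strip} $[a_1,v-1]$, whose width $v-a_1\leq u<q^{s-1}$ is strictly smaller than the column height $q^{s-1}$.

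Two features then drive the argument. First, $M(r)$ is by definition the set of the $r$ monomials of $\Md$ with the largest $x$-exponents, so it maximizes $a_1$ among all $r$-subsets, hence maximizes the number of full left columns. Second, inside the strip the profile $c(\cdot)$ is the lower staircase of the monomials of $N$ lying in $[a_1,v-1]$, so after translating by $a_1$, maximizing the strip contribution is exactly a footprint problem in a box of dimensions $(v-a_1)\times q^{s-1}$ with $\abs{Z_1'}=v-a_1\leq u<q^{s-1}=\abs{Z_2'}$. Since $\abs{Z_1'}\leq\abs{Z_2'}$, \cite{BEELEN2018130} applies and tells us that the descending-lex ($x>y$) arrangement is optimal \emph{inside} the strip as well. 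I would then run a compression/exchange argument in the spirit of \cite{BEELEN2018130}: starting from an arbitrary $N$ and repeatedly replacing its monomials by lex-larger ones, the displayed formula shows that $a_1$ never decreases and the strip contribution never decreases, and the process terminates at $M(r)$.

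The main obstacle is the coupling between $a_1$ and the strip that is absent in the fixed-box Cartesian setting: the cap $v=a_1+u$ slides with $a_1$, so raising $a_1$ to gain a full left column simultaneously shifts—and possibly truncates—the active strip and reshuffles the profile $c(\cdot)$, while one must also track the degree constraint $a+b\leq d$ through these moves. The crux is to show that this reshuffling never loses: a full column is worth $q^{s-1}$, whereas each strip column contributes at most $q^{s-1}-1$, so prioritizing $a_1$ always dominates. It is exactly the inequality $u<q^{s-1}$—the content of the hypothesis $u\neq\frac{q^s-1}{q-1}$—that puts the strip in the regime $\abs{Z_1'}\leq\abs{Z_2'}$ where the $x>y$ order is optimal; in the excluded norm-trace case $u=\frac{q^s-1}{q-1}$ one has $u>q^{s-1}$, the strip becomes wide, and the opposite $y>x$ ordering would be favored, so that case would require a separate treatment.
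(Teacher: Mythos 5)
Your setup matches the paper's: the deduction $u<q^{s-1}$ from the hypothesis, the decomposition of $\abs{\Delta^*(N)}$ into $a_1$ full columns plus a strip of width at most $u$, and the reduction of the strip optimization to the Cartesian result of \cite{BEELEN2018130} are exactly the ingredients used there. But the step you yourself flag as ``the crux'' --- that prioritizing $a_1$ ``never loses'' --- is a genuine gap, and neither of your two mechanisms closes it. First, the monotonicity claim underlying your exchange process is false: replacing a monomial by a lex-larger one can strictly decrease $\abs{\Delta^*}$. For instance (with $u\geq 2$ and no truncation by the degree or the box), replacing $N=\{x^ay^{q^{s-1}-1}\}$ by the lex-larger $\{x^{a+1}\}$ changes the count from $aq^{s-1}+u(q^{s-1}-1)$ to $(a+1)q^{s-1}$, a strict decrease; so ``$a_1$ never decreases and the strip contribution never decreases'' cannot both hold along arbitrary lex-increasing swaps, and your process is not monotone without a careful (and missing) specification of which swaps are permitted. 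Second, the inequality ``a full column is worth $q^{s-1}$, a strip column at most $q^{s-1}-1$'' settles only the easy case where the minimal $x$-exponents of $N$ and $M(r)$ differ by at least $u$, so the strips are disjoint; in the overlapping case $a_1(N)\leq a_1(M(r))<a_1(N)+u$, the two strip problems live in different (shifted, differently capped, differently degree-truncated) boxes, and \cite{BEELEN2018130}, which compares subsets of one and the same box, cannot be invoked directly.

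The paper closes exactly this gap with two devices your proposal lacks. (i) Lemma~\ref{L:derecha}: a maximizer $N$ may be assumed to satisfy $x^{a_1+u}\in N$ or $x^{a_1+u}\notin\Md$; note that its proof is not combinatorial but uses the attainability construction of Theorem~\ref{T:fpsharp} together with the monotonicity of generalized Hamming weights, so the paper's argument is not ``purely combinatorial'' as your opening asserts. (ii) Both sets are then anchored at the \emph{same} $a_1$, namely that of $N$: when $x^{a_1+u}\notin\Md$, all monomials of both $N$ and $M(r)$ lie in the width-$v'$ box, so \cite[Prop. 3.8]{BEELEN2018130} applies verbatim to $N/x^{a_1}$ and $M(r)/x^{a_1}$; when $x^{a_1+u}\in N$, monomials of either set lying beyond the cap $x^{v'}$ are redundant for $\Delta^*$, so the two sets may have different numbers $r'\geq r''$ of effective monomials inside the box, and the paper pads $M(r)$ to $M(r+\tilde r)$ so that Prop. 3.8 can be applied at size $r'$, then discards the padding using that footprints only grow when generators are removed. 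Your proposal has no counterpart to this ``wasted monomials'' bookkeeping --- indeed $M(r)$ itself can contain monomials that are multiples of its own cap --- and without it (or Lemma~\ref{L:derecha}) the cross-anchor comparison you need is not established.
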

\begin{proof}
Since $u\neq (q^s-1)/(q-1)$ and $u\mid (q^s-1)/(q-1)$, we have
$$
u\leq \frac{1}{2}(q^{s-1}+q^{s-2}+\cdots+1)<\frac{2q^{s-1}}{2}=q^{s-1}.
$$
Therefore, $u\leq q^{s-1}-1$. Let $N\subset \Md$ with $\abs{N}=r$ be such that 
$$
\abs{\Delta^*(N)}=\max\left\{ \abs{\Delta^*(M')}: M'\subset \Md,\abs{M'}=r\right\}.
$$
By Lemma \ref{L:derecha}, we can assume that either $x^{a_1+u}\in N$ or $x^{a_1+u}\not \in \Md$. If $x^{a_1+u}\not \in \Md$, then we have
$$
\abs{\Delta^*(M(r))}\geq \abs{\Delta^*(N)}.
$$
To see this, denote $v'=\min \{ u, u(q-1)+1-a_1 \}$ and note that 

\begin{equation}\label{eq:boxreduction}
\begin{aligned}
&\abs{\Delta^*(M(r))}=a_1(q^{s-1}-1)+\abs{\Delta\left( (M(r)/x^{a_1})\cup \{ y^{q^{s-1}},x^{v'}\}\right)},\\
&\abs{\Delta^*(N)}=a_1(q^{s-1}-1)+\abs{\Delta\left( (N/x^{a_1})\cup \{ y^{q^{s-1}},x^{v'}\}\right)},
\end{aligned}
\end{equation}
where $N/x^{a_1}$ denotes the set of the monomials of $N$ divided by $x^{a_1}$ (similarly for $M(r)/x^{a_1}$). Since $x^{a_1+u}\not \in \Md$, the $r$ elements of $M(r)/x^{a_1}$ and $N/x^{a_1}$ are contained in $\Delta(y^{q^{s-1}},x^{v'})$ (the power of $x$ of those monomials is lower than $v'$). By \cite[Prop. 3.8]{BEELEN2018130}, we know
$$
\abs{\Delta\left( (M(r)/x^{a_1})\cup \{ y^{q^{s-1}},x^{v'}\}\right)}\geq \abs{\Delta\left( (N/x^{a_1})\cup \{ y^{q^{s-1}},x^{v'}\}\right)}
$$
since this is the inequality that arises when considering a Cartesian code over a product of two sets of sizes $v'$ and $q^{s-1}$ (recall that $u\leq q^{s-1}-1$, which implies $v'\leq q^{s-1}-1$). 

If we assume now that $x^{a_1+u}\in N$, then (\ref{eq:boxreduction}) still holds, but some elements of $M(r)/x^{a_1}$ and $N/x^{a_1}$ are not contained in $\Delta(y^{q^{s-1}},x^{v'})$ because they are multiples of $x^{v'}$. Let 
$$
\begin{aligned}
\abs{(N/x^{a_1})\cap \Delta(y^{q^{s-1}},x^{v'})}=r'<r, \:
\abs{(M(r)/x^{a_1})\cap \Delta(y^{q^{s-1}},x^{v'})}=r''<r.
\end{aligned}
$$
By the definition of $M(r)$, we must have $r''\leq r'$. If $r''<r'$, there is $\tilde{r}$ such that $\abs{(M(r+\tilde{r})/x^{a_1})\cap \Delta(y^{q^{s-1}},x^{v'})}=r'$, and $(M(r+\tilde{r})/x^{a_1})\cap \Delta(y^{q^{s-1}},x^{v'})$ are precisely the first $r'$ elements of $\Delta(y^{q^{s-1}},x^{v'})$ in descending lexicographic order
Hence, by \cite[Prop. 3.8]{BEELEN2018130}, we have
$$
\abs{\Delta\left( (M(r+\tilde{r})/x^{a_1})\cup \{ y^{q^{s-1}},x^{v'}\}\right)}\geq \abs{\Delta\left( (N/x^{a_1})\cup \{ y^{q^{s-1}},x^{v'}\}\right)}.
$$
Since $M(r)/x^{a_1}\subset M(r+\tilde{r})/x^{a_1}$, we obtain the result. 
\end{proof}

Due to Lemma \ref{L:shadow}, we can directly obtain $d_r(\ev(\Md))$ when $u\neq \frac{q^s-1}{q-1}$, and we can also relate the generalized Hamming weights of these decreasing norm-trace codes with the ones from Cartesian codes. 

\begin{thm}\label{T:ghwRMlike}
Let $0\leq d\leq u(q-1)(q^{s-1}-1)$ and $1\leq r\leq \abs{\Md}$. If $u\neq \frac{q^s-1}{q-1}$, then
$r^{th}$ generalized Hamming weight is 
$$
d_r(\ev(\Md))=\abs{\xu}-\abs{\Delta^*(M(r))}=\abs{\xu}-\abs{\Delta(M(r),y^{q^{s-1}},x^{u(q-1)+1})}.
$$
As a consequence, if $u\leq \frac{q^{s-1}-1}{q-1}$, we have 
$$
d_r(\ev(\Md))=d_r(\Cart_u(d)).
$$
Moreover, for any $u$ such that $u\mid \frac{q^s-1}{q-1}$, we have the bound
$$
d_r(\ev(\Md))\geq d_r(\Cart_u(d)).
$$
\end{thm}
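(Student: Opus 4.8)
The plan is to read all three assertions off the footprint description of the generalized Hamming weights provided by Theorem~\ref{T:fpsharp}, combined with Lemmas~\ref{L:derecha} and~\ref{L:shadow} and the Cartesian estimates of~\cite{BEELEN2018130}. Applying Theorem~\ref{T:fpsharp} with $\M=\Md$ gives $d_r(\ev(\Md))=\abs{\xu}-\max\{\abs{\Delta^*(N)}:N\subset\Md,\ \abs{N}=r\}$. Lemma~\ref{L:shadow} says exactly that, when $u\neq\frac{q^s-1}{q-1}$, the set $M(r)$ attains this maximum, so the first equality $d_r(\ev(\Md))=\abs{\xu}-\abs{\Delta^*(M(r))}$ is immediate.

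The second equality is the technical core: I must check that replacing the generator $x^{\min\{a_1+u,\,u(q-1)+1\}}$ used in $\Delta^*(M(r))$ by $x^{u(q-1)+1}$ leaves the cardinality of the footprint unchanged, where $a_1$ is the smallest power of $x$ in $M(r)$. If $a_1+u\geq u(q-1)+1$ the two generating sets are identical and there is nothing to do. Otherwise, since $M(r)$ is a maximizer, Lemma~\ref{L:derecha} forces either $x^{a_1+u}\in M(r)$ or $x^{a_1+u}\notin\Md$. In the first case $x^{a_1+u}$ is already present as a generator through $M(r)$, and every monomial divisible by $x^{u(q-1)+1}$ is a fortiori divisible by $x^{a_1+u}$, so the two footprints coincide as sets. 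In the second case $x^{a_1+u}\notin\Md$ forces $a_1+u>d$, hence $a_1\leq d<a_1+u$; combined with the staircase shape of $M(r)$, which contains every monomial of $\Md$ of $x$-degree exceeding $a_1$, a short inspection shows that each monomial $x^ay^b$ with $a_1+u\leq a\leq u(q-1)$ is already a multiple of some element of $M(r)$, so the symmetric difference of the two footprints is empty. Either way $\abs{\Delta^*(M(r))}=\abs{\Delta(M(r),y^{q^{s-1}},x^{u(q-1)+1})}$.

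For the equality with the Cartesian code when $u\leq\frac{q^{s-1}-1}{q-1}$, this hypothesis gives $\abs{Z_1}=u(q-1)+1\leq q^{s-1}=\abs{Z_2}$, and in particular $u\neq\frac{q^s-1}{q-1}$, so the formula just established applies. Under $\abs{Z_1}\leq\abs{Z_2}$, the result of~\cite{BEELEN2018130} quoted above says $M(r)$ also maximizes $\abs{\Delta(N,y^{q^{s-1}},x^{u(q-1)+1})}$, whence $d_r(\Cart_u(d))=\abs{\xu}-\abs{\Delta(M(r),y^{q^{s-1}},x^{u(q-1)+1})}$; comparing with $d_r(\ev(\Md))$ gives equality. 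For arbitrary $u\mid\frac{q^s-1}{q-1}$, I would use that $\min\{a_1+u,\,u(q-1)+1\}\leq u(q-1)+1$ yields the set inclusion $\Delta^*(N)\subseteq\Delta(N,y^{q^{s-1}},x^{u(q-1)+1})$ for every $N$, so that $\max_N\abs{\Delta^*(N)}\leq\max_N\abs{\Delta(N,y^{q^{s-1}},x^{u(q-1)+1})}$. Since the footprint bound for the Cartesian GHW is tight, $d_r(\Cart_u(d))=\abs{\xu}-\max_N\abs{\Delta(N,y^{q^{s-1}},x^{u(q-1)+1})}$, and combining this with Theorem~\ref{T:fpsharp} gives $d_r(\ev(\Md))\geq d_r(\Cart_u(d))$.

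The main obstacle is the second equality of the first part: it is the only step that is not pure bookkeeping, because it requires a genuine argument about the geometry of the footprint rather than a manipulation of already-established identities. The delicate point is the subcase $x^{a_1+u}\notin\Md$, where one must exploit both the constraint $a_1\leq d<a_1+u$ and the precise column-by-column structure of $M(r)$ to be certain that no monomial of $x$-degree between $a_1+u$ and $u(q-1)$ escapes being covered by $M(r)$.
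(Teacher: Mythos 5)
Your proposal is correct, and its skeleton is the same as the paper's: Theorem~\ref{T:fpsharp} plus Lemma~\ref{L:shadow} for the first equality, a footprint identification for the second, and the Cartesian results of \cite{BEELEN2018130} for the last two claims, which you handle exactly as the paper does. The genuine divergence is the step you rightly flag as the core, namely $\abs{\Delta^*(M(r))}=\abs{\Delta(M(r),y^{q^{s-1}},x^{u(q-1)+1})}$. The paper proves it without Lemma~\ref{L:derecha} and without invoking maximality of $M(r)$: writing $x^{a_1}y^{b_1}$ for a monomial of minimal $x$-degree in $M(r)$, if $a_1+u<u(q-1)+1$ then either $b_1=0$ and $x^{a_1}\in M(r)$, or $b_1>0$, in which case $x^{a_1+1}\in\Md$ (its total degree is at most $a_1+b_1\leq d$ and $a_1+1\leq u(q-1)$) and it precedes $x^{a_1}y^{b_1}$ in descending lexicographic order, so $x^{a_1+1}\in M(r)$; either monomial divides both $x^{a_1+u}$ and $x^{u(q-1)+1}$, making both extra generators redundant, so the two footprints are literally equal. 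Your route instead applies Lemma~\ref{L:derecha} (legitimate, since Lemma~\ref{L:shadow} makes $M(r)$ a maximizer) and splits into $x^{a_1+u}\in M(r)$, which you close correctly, and $x^{a_1+u}\notin\Md$, where you leave a ``short inspection'' unfinished. It does close: $x^{a_1+u}\notin\Md$ together with $a_1+u\leq u(q-1)$ forces $a_1\leq d<a_1+u$, and then $x^d\in M(r)$ (for $a_1=d$ the minimal-degree monomial must be $x^d$ itself; for $a_1<d$ the pure power $x^d$ lies in $\Md$ and has $x$-degree exceeding $a_1$), and $x^d$ divides every monomial of $x$-degree at least $a_1+u$, so both extra generators are again redundant. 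Thus both arguments are sound; the paper's is lighter, being purely structural, while yours trades that for Lemma~\ref{L:derecha} plus a case analysis that, once completed, essentially rediscovers the same divisibility fact.
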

\begin{proof}
We start with the case $u\neq (q^s-1)/(q-1)$. By Theorem \ref{T:fpsharp} and Lemma \ref{L:shadow}, we have
$$
d_r(\ev(\Md))=\abs{\xu}-\abs{\Delta^*(M(r))}.
$$

Now we claim $\Delta^*(M(r))=\Delta(M(r),y^{q^{s-1}},x^{u(q-1)+1})$. Indeed, let $a_1$ be the smallest power of $x$ in $M(r)$. If $a_1+u\geq u(q-1)+1$, the claim follows from the definition of $\Delta^*$. If $a_1+u< u(q-1)+1$, then $x^{a_1}y^{b_1}\in M(r)$, for some $b_1$. If $b_1=0$, we have $x^{a_1}\mid x^{a_1+u}$. If $b_1>0$, by the definition of $M(r)$, this implies $x^{a_1+1}\in M(r)$. Since $x^{a_1+1}\mid x^{a_1+u}$, we have proved the claim. 
By \cite{BEELEN2018130}, if $u(q-1)\leq q^{s-1}-1$, we have $d_r(\Cart_u(d))=\abs{\xu}-\abs{\Delta(M(r),y^{q^{s-1}},x^{u(q-1)+1})}$. 

With respect to the bound, we note that for any set $N\subset \Md$, we have $\Delta^*(N) \subset  \Delta(N\cup \{y^{q^{s-1}},x^{u(q-1)+1}\})$. Thus, 
$$
d_r(\Cart_u(d))\leq \abs{\xu}-\abs{\Delta(N\cup \{y^{q^{s-1}},x^{u(q-1)+1}\})}\leq \abs{\xu} - \abs{\Delta^*(N)}.
$$
By choosing $N$ with maximum $\abs{\Delta^*(N)}$, we obtain the stated lower bound. 
\end{proof}

By Theorem \ref{T:ghwRMlike}, the codes $\ev(\M_{\leq d})$ are on par with $\Cart_u(d)$ if $u\leq \frac{q^{s-1}-1}{q-1}$, and they can perform better in other cases. Now, we give an example with each of the cases of Theorem \ref{T:ghwRMlike}, showing how decreasing norm-trace codes can outperform Cartesian codes.

\begin{ex}\label{ex:ghws}
Let $q=3$, $s=2$, $d=4$ and $r=3$. For $u=1$, we have $u\leq (q^{s-1}-1)/(q-1)=1$, and
$$
d_3(\ev(\M_{\leq 4}))=d_3(\Cart_1(4))=3.
$$
This is obtained with the set $M(3)=\{y^2x^2,yx^2,x^2\}$. If we consider $u=2$ instead, we obtain
$$
d_3(\ev(\M_{\leq 4}))=6>5=d_3(\Cart_2(4)).
$$
In this case, $M(3)=\{x^4,yx^3,x^3\}$. Finally, if $u=4=(q^s-1)/(q-1)$, we have
$$
d_3(\ev(\M_{\leq 4}))=17>9=d_3(\Cart_4(4)).
$$
This value is attained with $N=\{yx^3,yx^2,y^2x^2\}\neq M(3)=\{x^4,yx^3,x^3\}$ ($\abs{\xu}-\Delta^*(M(3))$ gives 18 in this case).

Moreover, these examples also showcase the importance of considering the extra monomial $x^{a+u}$ in the definition of $\Delta^*(F)$. Otherwise, the footprint approach would give the same bound for the generalized Hamming weights as the one we have for Cartesian codes.
\end{ex}

In Example \ref{ex:ghws}, for the case $u=4$, the set of monomials $N$ which gives the maximum value of $\Delta^*(N)$ does not correspond to $M(r)$, nor to the analogous set obtained by choosing $y<x$ in the ordering of the elements. This shows that for $u=\frac{q^s-1}{q-1}$, the description of the set $N$ that maximizes $\abs{\Delta^*(N)}$ gets more involved. We will now study the case $u=\frac{q^s-1}{q-1}$ in more detail.

Let $0\leq a_1\leq u(q-1)$, and let 
$$
\mathcal{M}^{a_1}_{\leq d}:=\fqs[x,y]_{\leq d-a_1}\cap \Delta(y^{q^{s-1}},x^u).
$$
Consider $M_{a_1}'(r)$, the set formed by the first $r$ elements of $\mathcal{M}^{a_1}_{\leq d}$ in descending lexicographic order with $y<x$ (for $r\leq 0$, we will define $M_{a_1}'(r)=\emptyset$). Since $u=\frac{q^s-1}{q-1}>q^{s-1}$, by \cite{BEELEN2018130} this set maximizes $\abs{\Delta(y^{q^{s-1}},x^u,N)}$ for any $N\subset \mathcal{M}^{a_1}_{\leq d}$. If $d\geq a_1+u$, we consider the set
$$
T_{a_1}:=\{ x^ay^b\mid a_1+u\leq a \leq d, 0\leq b \leq d-a \}.
$$
We denote $r_{a_1}:=\abs{T_{a_1}}=\frac{(d-a_1-u)(d-a_1-u+1)}{2}$. In the next result, we show that we only need to consider a few sets of monomials to compute the generalized Hamming weights as in Theorem~\ref{T:fpsharp}.

\begin{prop}\label{P:specialcase}
Let $u=\frac{q^s-1}{q-1}$, $1\leq d \leq u(q-1)(q^{s-1}-1)$ and $1\leq r\leq \abs{\Md}$. For each $0\leq a_1\leq d$, consider
$$
\mathcal{N}(a_1):=\{ N \subset \Md \; :  \abs{N}=r, \; a_1=\min \{ a\mid x^ay^b\in N, \text{ for some }b \} \}.
$$
Denote $r'=r_{a_1}+d-q^{s-1}-a_1+2$ and consider
$$
N_{a_1}=\begin{cases}
    x^{a_1}\cdot M'_{a_1}(r) &\text{ if } d<a_1+u, \\
    x^{a_1}\cdot M'_{a_1}(r-r_{a_1})\cup T_{a_1}  &\text{ if } d\geq a_1+u \text{ and } r\geq r'.
\end{cases}
$$
Then we have
$$
\abs{N_{a_1}}=\max \left \{ \abs{\Delta^*(N)} : N\subset \mathcal{N}(a_1), \abs{N}=r \right\}.
$$
Moreover, 
$$
d_r(\ev(\Md))=\abs{\xu}-\max\biggl\{ \abs{\Delta^*(N_{a_1})} \biggr.: 
\begin{aligned}
&0\leq a_1\leq \min\{d,u(q-1)\} \\
& d<a_1+u \text{ or } d\geq a_1+u \text{ and } r\geq r'.
\end{aligned}\biggl.\biggr\}.
$$
\end{prop}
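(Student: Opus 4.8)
The plan is to start from Theorem~\ref{T:fpsharp}, which gives $d_r(\ev(\Md)) = \abs{\xu} - \max\{\abs{\Delta^*(N)} : N \in \mathcal{N}_r\}$, and to organize this maximization according to the smallest $x$-exponent occurring in $N$. Writing $\mathcal{N}_r$ as the disjoint union $\bigcup_{a_1} \mathcal{N}(a_1)$ over $0 \leq a_1 \leq \min\{d, u(q-1)\}$, it suffices to (i) identify, for each fixed $a_1$, a set $N_{a_1} \in \mathcal{N}(a_1)$ realizing $\max\{\abs{\Delta^*(N)} : N \in \mathcal{N}(a_1),\ \abs{N}=r\}$, and then (ii) show that the global maximum over all $a_1$ is already attained at some $a_1$ meeting the two stated conditions, so the remaining values may be discarded. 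Statement (i) is the first displayed equation (read as $\abs{\Delta^*(N_{a_1})} = \max\{\dots\}$), and statement (ii) yields the final formula for $d_r(\ev(\Md))$.

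For fixed $a_1$ I would factor out $x^{a_1}$ exactly as in the decomposition~(\ref{eq:boxreduction}) used for Lemma~\ref{L:shadow}: every $N \in \mathcal{N}(a_1)$ contributes the full columns with $x$-exponent below $a_1$, while the remaining contribution is the footprint of the shifted configuration $N/x^{a_1}$ inside the box $\Delta(y^{q^{s-1}}, x^{v'})$ with $v' = \min\{u, u(q-1)+1-a_1\}$, truncated to total degree $\leq d - a_1$. The essential difference from Lemma~\ref{L:shadow} is that here $u = \frac{q^s-1}{q-1} > q^{s-1}$, so for the relevant range of $a_1$ the box is \emph{wide} ($v' \geq q^{s-1}$) rather than tall; consequently the footprint-maximizing selection is governed by \cite[Prop. 3.8]{BEELEN2018130} applied with the roles of $x$ and $y$ interchanged, i.e.\ greedily in the order privileging the $y$-degree (the short side of the box). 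This is precisely the order defining $M'_{a_1}(r)$. When $d < a_1 + u$, the degree truncation forces every admissible monomial strictly to the left of the wall $x^{v'}$, so the local problem is a genuine box (Cartesian) footprint maximization and is solved outright by $x^{a_1} M'_{a_1}(r)$.

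When $d \geq a_1 + u$, monomials with local $x$-exponent at least $u$ become available; these are exactly the elements of $T_{a_1}$, and each is a multiple of $x^{a_1+u}$, hence of the generator $x^{\min\{a_1+u,\,u(q-1)+1\}}$ entering the definition of $\Delta^*$. Such monomials therefore do not shrink the footprint: they are \emph{free} slots. The optimal strategy is then to spend all $r_{a_1}$ of these free slots on $T_{a_1}$ and to fill the remaining $r - r_{a_1}$ slots with the box-optimal selection $x^{a_1}M'_{a_1}(r - r_{a_1})$, which is exactly $N_{a_1}$. The threshold $r \geq r'$, with $r' = r_{a_1} + d - q^{s-1} - a_1 + 2$, is what guarantees that enough in-box slots remain for the greedy selection to reach the local pure-$y$ column; this both keeps the minimal $x$-exponent equal to $a_1$ (so $N_{a_1} \in \mathcal{N}(a_1)$) and certifies that $M'_{a_1}(r - r_{a_1})$ attains the box optimum. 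For the excluded values of $a_1$ (those with $d \geq a_1 + u$ but $r < r'$), one argues as in Lemma~\ref{L:derecha} via the monotonicity of Theorem~\ref{T:monotonicity} that the corresponding maximum cannot beat the retained ones, so they drop out of the final maximization; assembling the per-$a_1$ optima then gives the displayed formula.

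I expect the main obstacle to be this hybrid optimization in the wide-box regime: one must reconcile the full-box statement of \cite[Prop. 3.8]{BEELEN2018130} with the truncation by total degree and with the presence of the free beyond-wall monomials $T_{a_1}$, all while enforcing that the minimal $x$-exponent is exactly $a_1$. Concretely, the delicate point is proving that packing \emph{all} of $T_{a_1}$ together with the greedy in-box set is optimal inside $\mathcal{N}(a_1)$ --- ruling out any advantageous trade of a beyond-wall monomial for an in-box one --- and pinning down the exact threshold $r'$. Once these are settled, the partition over $a_1$ and the monotonicity argument proceed along the lines already established for Theorem~\ref{T:ghwRMlike}.
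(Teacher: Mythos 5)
Your proposal is correct and follows essentially the same route as the paper's proof: partition the candidate sets by the minimal $x$-exponent $a_1$, factor out $x^{a_1}$ to reduce to a wide-box Cartesian optimization solved by \cite[Prop. 3.8]{BEELEN2018130} with the $y$-privileging order (valid since $u=\frac{q^s-1}{q-1}>q^{s-1}$), treat the beyond-wall monomials $T_{a_1}$ as footprint-free slots that must all be included, and discard the classes with $d\geq a_1+u$ and $r<r'$ by the Lemma~\ref{L:derecha}/monotonicity argument. The obstacles you flag (optimality of packing all of $T_{a_1}$, and the role of the threshold $r'$ in keeping the minimal $x$-exponent equal to $a_1$) are exactly the points the paper settles, in the same way you suggest.
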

\begin{proof}
Fix $0\leq a_1\leq d$ and $1\leq d \leq u(q-1)(q^{s-1}-1)$. Assume first $d<a_1+u$. For any $N\in \mathcal{N}(a_1)$, we have the monomials $x^ay^b$ with $0\leq a <a_1$, and the monomials $x^ay^b$ with $a_1+u\leq a \leq u(q-1)$ in $\Delta(y^{q^{s-1}},x^{u(q-1)+1}))\setminus \Delta^*(N)$. Therefore, the differences between different sets $N\in \mathcal{N}(a_1)$ arise from the monomials $x^ay^b$ with $a_1\leq a <a_1+u$, and we have to find the set $N$ that maximizes the number of monomials in $\Delta^*(N)$ in this region. This is equivalent to finding the set $N' \subset \Md^{a_1}$ that maximizes the number of monomials in $\Delta(y^{q^{s-1}},x^u,N)$ (we then multiply the monomials of $N'$ by $x^{a_1}$). By \cite{BEELEN2018130}, we know that this set is $M_{a_1}'(r)$. 

If $d\geq a_1+u$, we have $x^{a_1+u}\in \Md$. Following the idea of the proof of Lemma \ref{L:derecha}, if $r\geq r'>r_{a_1}$, then $T_{a_1}\subset N_{a_1}$. Restricting ourselves to the subsets $N\subset \Md$ such that $T_{a_1}\subset N$, the problem of finding $N_{a_1}$ among them is as in the previous case, since again the differences arise from the monomials $x^ay^b$ with $a_1\leq a <a_1+u$. Note that $r\geq r'$ ensures that $\min\{a\mid x^ay^b\in x^{a_1}\cdot M'_{a_1}(r-r_{a_1})\}=a_1$.

We will use Theorem~\ref{T:fpsharp} to finish the proof. It is clear that 
$$
\max\left\{\abs{\Delta^*(N)}, N \subset \mathcal{N}_r\right\}=\max\left\{\abs{\Delta^*(N_{a_1})}, 0\leq a_1\leq \min\{d,u(q-1)\}\right\}.
$$
Assume $r< r'$. By Lemma \ref{L:derecha}, if 
\begin{equation}\label{eq:hypmax}
\abs{\Delta^*(N_{a_1})}=\max\left\{\abs{\Delta^*(N)}, N \subset \mathcal{N}_r\right\},
\end{equation}
then we should have $T_{a_1}\subset N_{a_1}$. This is a contradiction if $r<r_{a_1}$. If $r\geq r_{a_1}$, we have $T_{a_1}\subset N_{a_1}$, and the argument from above shows that the set $N$ with $r-r_{a_1}$ elements that maximizes the number of monomials in $\Delta(y^{q^{s-1}},x^u,N)$ is $M_{a_1}'(r-r_{a_1})$. The set $x^{a_1}\cdot M'_{a_1}(r-r_{a_1})\cup T_{a_1}$ has $r$ monomials. We have
$$
a'_1=\min\{a\mid x^ay^b\in x^{a_1}\cdot M'_{a_1}(r-r_{a_1})\cup T_{a_1}\}=\begin{cases}
    a_1+u &\text{ if } r=r_{a_1},\\
    d-q^{s-1}-r+r_{a_1}+2 &\text{ if } r>r_{a_1}.
\end{cases}
$$
Since $r<r'$, this is not equal to $a_1$. This means that 
$$
\abs{\Delta^*(N_{a'_1})}> \abs{\Delta^*(N_{a_1})},
$$
with $a'_1>a_1$, a contradiction with Equation~(\ref{eq:hypmax}). 
\end{proof}

\begin{ex}
Continuing with Example \ref{ex:ghws}, for the case $u=4$ it is easy to check that
$$
N=\{yx^3,yx^2,y^2x^2\}=x^2\cdot \{ yx,y,y^2\}=x^2 \cdot M_2'(3)=N_2.
$$
\end{ex}

Theorem \ref{P:specialcase} allows us to compute the generalized Hamming weights by considering, at most, $\min\{d,u(q-1)\}+1$ sets of monomials. Although this is already relatively easy to compute, we show now a particular case in which we can directly obtain the generalized Hamming weights.

\begin{cor}
Let $u=\frac{q^s-1}{q-1}$, $1\leq d \leq q^{s-1}$ and $1\leq r\leq \abs{\Md}$. Then 
$$
d_r(\ev(\Md))=\abs{\xu}-\abs{\Delta^*(M'_0(r))}.
$$
\end{cor}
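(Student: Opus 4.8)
The plan is to apply Proposition~\ref{P:specialcase} with the specialization $1 \le d \le q^{s-1}$ and show that under this hypothesis the maximum over all admissible $a_1$ is always attained at $a_1 = 0$, so that the set $N_0 = M'_0(r)$ (which here means $x^0 \cdot M'_0(r) = M'_0(r)$, since $d < 0 + u$ always holds) realizes the $r^{th}$ generalized Hamming weight. First I would verify that the second branch in the definition of $N_{a_1}$ never activates: since $u = \frac{q^s-1}{q-1} \ge q^{s-1}+1 > q^{s-1} \ge d$, we have $d < u \le a_1 + u$ for every $a_1 \ge 0$, so only the case $d < a_1 + u$ occurs and $N_{a_1} = x^{a_1} \cdot M'_{a_1}(r)$ throughout. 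This already eliminates the $T_{a_1}$ and $r'$ bookkeeping from Proposition~\ref{P:specialcase}, reducing the maximization to $d_r(\ev(\Md)) = \abs{\xu} - \max\{ \abs{\Delta^*(x^{a_1} \cdot M'_{a_1}(r))} : 0 \le a_1 \le \min\{d, u(q-1)\} \}$.

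The core of the argument is then a monotonicity comparison in $a_1$. Using the box-reduction identity analogous to Equation~(\ref{eq:boxreduction}), I would write $\abs{\Delta^*(x^{a_1} \cdot M'_{a_1}(r))} = a_1(q^{s-1}-1) + \abs{\Delta\bigl( M'_{a_1}(r) \cup \{y^{q^{s-1}}, x^{u}\} \bigr)}$, where the first summand counts the monomials in the columns $0 \le a < a_1$ lying below $y^{q^{s-1}}$, and the second counts the contribution of the remaining box of horizontal width $u$. The task is to show this quantity is maximized at $a_1 = 0$. The intuition is that raising $a_1$ by one adds $q^{s-1}-1$ guaranteed points from the new full column but forces the $r$ monomials of $M'_{a_1}(r)$ to be drawn from the shrunken region $\fqs[x,y]_{\le d - a_1} \cap \Delta(y^{q^{s-1}}, x^u)$, and because $d \le q^{s-1} < u$ this degree constraint is the binding one (not the $x^u$ wall), so the footprint $\Delta(\cdots)$ behaves like that of a Reed-Muller triangle rather than a Cartesian box.

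The hard part will be making the comparison between $a_1$ and $a_1+1$ precise, since the optimal Cartesian-style sets $M'_{a_1}(r)$ and $M'_{a_1+1}(r)$ are drawn in different degree-truncated regions and are not nested in an obvious way. My preferred route is a direct combinatorial estimate: because $d \le q^{s-1}$, the relevant staircase in each region is governed purely by the degree bound $a + b \le d - a_1$, so $M'_{a_1}(r)$ consists of the first $r$ monomials of the full triangle of degree $\le d - a_1$ read in descending lex order with $y < x$, and one can compare $\abs{\Delta^*}$ for consecutive $a_1$ by tracking how the footprint complement $\Delta(y^{q^{s-1}}, x^u, M'_{a_1}(r))$ grows when the triangle is enlarged from degree $d - a_1 - 1$ to $d - a_1$. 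Quantifying that the net effect of decrementing $a_1$ (gaining triangle room at the cost of one column of $q^{s-1}-1$ points) is non-negative is the crux, and I expect it to follow from the same Macaulay-type shadow inequality \cite[Prop. 3.8]{BEELEN2018130} already invoked in Lemma~\ref{L:shadow}, applied now to the degree-filtered triangles. Once the maximum is pinned to $a_1 = 0$, the stated formula $d_r(\ev(\Md)) = \abs{\xu} - \abs{\Delta^*(M'_0(r))}$ follows immediately.
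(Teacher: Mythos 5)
Your reduction is sound and is the same one the paper uses: since $u=\frac{q^s-1}{q-1}\geq q^{s-1}+1>q^{s-1}\geq d$, only the branch $d<a_1+u$ of Proposition~\ref{P:specialcase} can occur, so the corollary amounts to showing that $\abs{\Delta^*(x^{a_1}\cdot M'_{a_1}(r))}$ is maximized at $a_1=0$. But that maximization \emph{is} the entire content of the corollary, and your proposal stops exactly there: you correctly name the comparison between consecutive values of $a_1$ as ``the crux,'' and then only ``expect'' it to follow from \cite[Prop.~3.8]{BEELEN2018130}. That inequality, as used in Lemma~\ref{L:shadow} and Proposition~\ref{P:specialcase}, compares two sets of $r$ monomials drawn from the \emph{same} region and says a lex-initial segment maximizes the footprint; by itself it says nothing about how the maximal footprint changes when the region itself shrinks from $\mathcal{M}^{a_1}_{\leq d}$ to $\mathcal{M}^{a_1+1}_{\leq d}$, which is the comparison you need. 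So there is a genuine gap at the central step.

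What closes the gap --- and is the paper's actual argument --- is a structural observation your write-up never makes: because $d\leq q^{s-1}<u$, each region $\mathcal{M}^{a_1}_{\leq d}$ is a plain triangle of degree $d-a_1$, and the descending lexicographic order with $y$ dominant is compatible with lowering $y$-exponents; hence $M'_{a_1+\varepsilon}(r)$ is exactly $M'_{a_1}(r)$ with every $y$-exponent decreased by $\varepsilon$ (same $x$-exponents, same staircase shape). Feeding both staircases into the formula of Lemma~\ref{lemafootprint}, with the paper's notation $b_1,b_2,a_2$, one gets $\abs{\Delta^*(N_{a_1})}-\abs{\Delta^*(N_{a_1+\varepsilon})}=\varepsilon(u-q^{s-1})>0$, where positivity is precisely the hypothesis $u=\frac{q^s-1}{q-1}>q^{s-1}$; this pins the maximum at $a_1=0$. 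In other words, the quantitative trade-off is ``one bottom row of width $u$ versus one column of height $q^{s-1}$,'' not the vague ``triangle room versus a column'' of your sketch. Your shadow-inequality idea can in fact be repaired in this spirit: lift $y\cdot M'_{a_1+1}(r)$ into $\mathcal{M}^{a_1}_{\leq d}$, check that multiplying by $y$ increases the footprint by exactly $u$ (a new full bottom row), and then invoke the optimality of $M'_{a_1}(r)$ in the larger triangle to conclude $\abs{\Delta^*(N_{a_1})}\geq \abs{\Delta^*(N_{a_1+1})}+u-q^{s-1}$. But both the lifting construction and the exact accounting of the $y$-shift are absent from your proposal, and without them the argument does not go through.
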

\begin{proof}
From Proposition \ref{P:specialcase}, if $\mathcal{N}_{a_1}\neq \emptyset$ we have
$$
N_{a_1}=x^{a_1}\cdot M'_{a_1}(r).
$$
Consider $b_1=\min \{b\mid y^b \in M'_{a_1}(r)\}$, $b_2=\min \{b \mid x^ay^b\in M'_{a_1}(r) \text{ for some }a>0\}$, and $\lambda=\min\{a\mid x^ay^{b_1}\in M'_{a_1}(r)\}$. Defining $a_2=a_1+\lambda$ and using Lemma \ref{lemafootprint}, it is straightforward to check that
$$
\abs{\Delta^*(N_{a_1})}=a_1q^{s-1}+b_2u+(a_2-a_1)(b_1-b_2).
$$
If we want to compare this with $\abs{\Delta^*(N_{a'_1})}$, for some $a'_1=a_1+\varepsilon$ with $\varepsilon>0$, we see that the corresponding values of $b_1,b_2,a_2$ for $a'_1$ become $b'_1=b_1-\varepsilon$, $b'_2=b_2-\varepsilon$, and $a_2=a_2+\varepsilon$. The difference is
$$
\abs{\Delta^*(N_{a_1})}-\abs{\Delta^*(N_{a'_1})}=-\varepsilon q^{s-1}+\varepsilon u=\epsilon\left(  \frac{q^s-1}{q-1}-q^{s-1}\right)>0,
$$
which means that $N_0=M_0'(r)$ maximizes $\abs{\Delta^*(N_{a_1})}$. 
\end{proof}

\section{Relative generalized Hamming weights of decreasing norm-trace codes and quantum codes}\label{sec:RGHW}

Relative generalized Hamming weights were introduced to characterize the security of linear ramp secret-sharing schemes. The $r$-the relative generalized Hamming weight of a pair of codes $C_1$ and $C_2$ with $C_2\subset C_1$ is the smallest number such that any set of shares of that size allows to recover $r$ $q$-bits of information about the secret for any set of that size, and the $m$-th relative generalized Hamming weight of the pair of codes $C_1^\perp \subset C_2^\perp$ gives the smallest size of a set of shares that can determine $m$ $q$-bits of information about the secret \cite{matsumotoRGHW,geilrghws}. These parameters have been studied for many different families of codes \cite{BEELEN2018130, galindonested, geilRGHWReedMuller}, and in particular, for one-point algebraic geometry codes \cite{geilrghws,christensennested}. In what follows, we study the relative generalized Hamming weights of decreasing norm-trace codes and their relation to impure quantum codes. We start by defining the relative generalized Hamming weights of a pair of codes using the definition from \cite[Lemma 1]{luoRGHW}.

\begin{defn} \label{d:rghw}
Let $C'\subset C\subset \fqs^n$ be two linear codes, and consider $r$ with $1\leq r \leq \dim C -\dim C'$. The $r$-th \textit{relative generalized Hamming weight} of $C$ and $C'$, denoted $M_r(C,C')$, is
$$
M_r(C,C')=\min \{ \abs{\supp(D)} : D \text{ is a subcode of $C$ with } \dim D=r, \; D\cap C'=\{0\}\}.
$$
It is useful to note that the condition $D\cap C'=\{0\}$ is equivalent to the condition $D\setminus\{0\}\subset C\setminus C'$. That is why, in many cases, one studies the subcodes of $C\setminus C'$ (disregarding the vector $0$) for computing the relative generalized Hamming weights.
\end{defn}

From the definition, it is clear that the relative generalized Hamming weights are a generalization of generalized Hamming weights obtained by considering $C'=\{0\}$. Moreover, for $1\leq r \leq \dim C-\dim C'$, we have
\begin{equation}\label{eq:boundRGHWwithGHW}
M_r(C,C')\geq d_r(C).
\end{equation}
Thus, we can always bound the relative generalized Hamming weights from below using the generalized Hamming weights, and we can use the results from the previous sections to obtain bounds for the relative generalized Hamming weights. 
With respect to decreasing norm-trace codes, it is possible to use the techniques from the previous section to obtain the relative generalized Hamming weights. The idea is similar to that of the relative footprint from \cite{hiramRGHW}. 

\begin{thm}\label{T:RGHW}
Let $\M_2\subset\M_1\subset  \Delta(y^{q^{s-1}},x^{u(q-1)+1})$ be decreasing sets. Let $1\leq r \leq \abs{\M_1}-\abs{\M_2}$, and let $\M_{1,2}^r$ be the set of all subsets $M$ of $\{\ini(f)\mid f\in \mathcal{L}(\M_1)\setminus  \mathcal{L}(\M_2)\}$ with $r$ elements. Then
$$
M_r(\ev(\M_1),\ev(\M_2))\geq \abs{\xu}-\max \{\Delta^*(M)\mid M \in\M_{1,2}^r\}.
$$
Moreover, if $x^ay^b\in \M_1\setminus \M_2$ implies $x^ay^b>x^{a'}y^{b'}$, for all $x^{a'}y^{b'}\in \M_2$ (we abbreviate this by writing $\M_1\setminus \M_2> \M_2$), then 
\begin{equation}\label{eq:drel}
M_r(\ev(\M_1),\ev(\M_2))= \abs{\xu}-\max \{\Delta^*(M)\mid M \subset\M_1\setminus \M_2,\;\abs{M}=r\}.
\end{equation}

\end{thm}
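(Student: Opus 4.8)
The plan is to mirror the argument structure of Theorem~\ref{T:fpsharp}, but adapted to the relative setting via Definition~\ref{d:rghw}. For the inequality, first I would recall that by Definition~\ref{d:rghw} the condition $D \cap \ev(\M_2) = \{0\}$ is equivalent to $D \setminus \{0\} \subset \ev(\M_1) \setminus \ev(\M_2)$. So a relative subcode of dimension $r$ is generated by $r$ linearly independent codewords, each the evaluation of some $f \in \mathcal{L}(\M_1) \setminus \mathcal{L}(\M_2)$. For such a generating set $F = \{f_1, \dots, f_r\}$, the leading monomials $F_\ini = \{\ini(f_1), \dots, \ini(f_r)\}$ form an element of $\M_{1,2}^r$ (they are distinct by linear independence, after reducing modulo $I_\xu$ so that each $f_i$ is supported on $\Delta(I_\xu)$). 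Then Inequality~(\ref{eq:boundsupp}), namely $\abs{\supp(D)} \geq \abs{\xu} - \abs{\Delta^*(F_\ini)}$, applies verbatim. Minimizing over all such $D$ gives $M_r \geq \abs{\xu} - \max\{\abs{\Delta^*(M)} : M \in \M_{1,2}^r\}$, which is the stated lower bound.

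For the equality under the hypothesis $\M_1 \setminus \M_2 > \M_2$, I would argue in two parts. First, the hypothesis lets me identify the maximizing set: I claim that when $\M_1 \setminus \M_2 > \M_2$, we have $\{\ini(f) : f \in \mathcal{L}(\M_1) \setminus \mathcal{L}(\M_2)\} = \M_1 \setminus \M_2$. Indeed, if $f \in \mathcal{L}(\M_1) \setminus \mathcal{L}(\M_2)$, then $f$ has at least one monomial in $\M_1 \setminus \M_2$; since every monomial of $\M_1 \setminus \M_2$ exceeds every monomial of $\M_2$ in the order, the leading monomial of $f$ must itself lie in $\M_1 \setminus \M_2$. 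Conversely each monomial in $\M_1 \setminus \M_2$ is realized as a leading monomial. This reduces the right-hand side to $\max\{\abs{\Delta^*(M)} : M \subset \M_1 \setminus \M_2,\ \abs{M} = r\}$, matching the form in Equation~(\ref{eq:drel}).

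Second, I must establish the reverse inequality $M_r \leq \abs{\xu} - \max\{\cdots\}$ by exhibiting, for the maximizing $N \subset \M_1 \setminus \M_2$, a relative subcode $D$ of dimension $r$ attaining the bound. Here I invoke the construction in the proof of Theorem~\ref{T:fpsharp}: for $N = \{x^{a_i}y^{b_i} : 1 \leq i \leq r\}$ I build polynomials $F = \{f_1, \dots, f_r\}$ with $F_\ini = N$ and $\abs{V_\xu(F)} = \abs{\Delta^*(N)}$. The key additional point to verify is that each $f_i$ genuinely lies in $\mathcal{L}(\M_1) \setminus \mathcal{L}(\M_2)$, so that the resulting code $D = \ev(\langle F \rangle)$ is a valid relative subcode. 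Because $N \subset \M_1 \setminus \M_2$ and $\M_1$ is decreasing, every monomial appearing in the factored polynomials $f_i$ divides a monomial of $N$ and hence lies in $\M_1$; and since $\ini(f_i) \in \M_1 \setminus \M_2$ while all monomials of $\M_2$ are strictly smaller, $f_i \notin \mathcal{L}(\M_2)$. This yields the matching upper bound.

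The main obstacle I anticipate is the second part of the equality: verifying that the explicitly constructed polynomials $f_i$ remain in $\mathcal{L}(\M_1) \setminus \mathcal{L}(\M_2)$. The membership $f_i \in \mathcal{L}(\M_1)$ requires that the decreasing property of $\M_1$ controls \emph{all} monomials in the product expansion, not just the leading one, and the exclusion $f_i \notin \mathcal{L}(\M_2)$ hinges precisely on the hypothesis $\M_1 \setminus \M_2 > \M_2$ to guarantee that the leading monomial keeps the polynomial outside $\mathcal{L}(\M_2)$. Without this ordering hypothesis, a polynomial with leading monomial in $\M_1 \setminus \M_2$ could still fail to give a codeword in $\ev(\M_1) \setminus \ev(\M_2)$ after evaluation, which is exactly why only the lower bound survives in the general case.
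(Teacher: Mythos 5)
Your proposal follows the same route as the paper's proof in all essentials: the footprint bound (\ref{eq:boundsupp}) applied to lifted bases with distinct initial monomials gives the general inequality; the observation that $\M_1\setminus\M_2>\M_2$ forces $\ini(f)\in\M_1\setminus\M_2$ for every $f\in\mathcal{L}(\M_1)\setminus\mathcal{L}(\M_2)$ identifies $\M_{1,2}^r$ with the $r$-element subsets of $\M_1\setminus\M_2$; and the bound is attained by the polynomials constructed in the proof of Theorem~\ref{T:fpsharp}, which lie in $\mathcal{L}(\M_1)$ because all of their monomials divide their initial monomials and $\M_1$ is decreasing, and lie outside $\mathcal{L}(\M_2)$ because their initial monomials do. The one organizational difference is that the paper does not argue from Definition~\ref{d:rghw} directly: it invokes the exact relative-footprint formula of \cite{hiramRGHW}, namely $M_r(\ev(\M_1),\ev(\M_2))=|\xu|-\max\{|\Delta(I_\xu,F)| : F\in R_r\}$, where $R_r$ is the collection of $r$-element sets in $\mathcal{L}(\M_1)\setminus\mathcal{L}(\M_2)$ with pairwise distinct initial monomials; this outsources precisely the definitional bookkeeping that your plan takes on itself.

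That bookkeeping is where your write-up has a genuine (though repairable) gap, in the last step of the equality direction. You assert that $D=\ev(\langle F\rangle)$ is ``a valid relative subcode'' because each generator $f_i$ lies in $\mathcal{L}(\M_1)\setminus\mathcal{L}(\M_2)$. That inference is invalid in general: the condition $D\cap\ev(\M_2)=\{0\}$ constrains every nonzero element of $D$, not just a generating set---two vectors outside $C'$ may have a difference inside $C'$, which is the very point of the reformulation $D\setminus\{0\}\subset C\setminus C'$ in Definition~\ref{d:rghw}. What saves you is an argument you never make: since the $f_i$ have pairwise distinct initial monomials, all lying in $\M_1\setminus\M_2$, any nonzero combination $g=\sum c_if_i$ satisfies $\ini(g)=\max\{\ini(f_i):c_i\neq0\}\in\M_1\setminus\M_2$, hence $g$ has a monomial outside $\M_2$, so $g\notin\mathcal{L}(\M_2)$; injectivity of $\ev$ on polynomials supported on $\Delta(I_\xu)$ then yields both $\dim D=r$ and $D\cap\ev(\M_2)=\{0\}$. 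A similar but milder imprecision occurs in your lower bound: distinct initial monomials are not a consequence of linear independence (consider $x$ and $x+1$); you must triangularize the lifted basis by Gaussian elimination with respect to the monomial order, noting that the resulting polynomials correspond to nonzero elements of $D$ and hence still lie in $\mathcal{L}(\M_1)\setminus\mathcal{L}(\M_2)$. With these two verifications inserted, your proof is complete and matches the paper's.
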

\begin{proof}
Consider $R_r$ the set of all subsets $F=\{f_1,\dots,f_r\}\subset \mathcal{L}(\M_1)\setminus  \mathcal{L}(\M_2)$ of size $r$ such that $\ini(f_1),\dots,\ini(f_r)$ are different. From \cite{hiramRGHW} we have
\begin{equation*}\label{eq:RGHWAC}
M_r(\ev(\M_1),\ev(\M_2))=\abs{\xu}-\max \{\abs{\Delta(I_\xu,F)}\mid F\in R_r\}.
\end{equation*}   
Taking into account that $\abs{\Delta(I_\xu,F)}=\abs{V_\xu(F)}$ and the bound (\ref{lowerbound}), we obtain
$$
\begin{aligned}
M_r(\ev(\M_1),\ev(\M_2))&\geq \abs{\xu}-\max \{\Delta^*(F)\mid F\in R_r\}\\
&=\abs{\xu}-\max \{\Delta^*(M)\mid M \in\M_{1,2}^r\}.
\end{aligned}
$$
If $\M_1\setminus \M_2>\M_2$, then, for any $f\in \mathcal{L}(M_1)\setminus \mathcal{L}(M_2)$, we have $\ini(f)\in \M_1\setminus \M_2$. Indeed, if we had $\ini(f)\in \M_2$, by the hypothesis $\M_1\setminus \M_2>\M_2$ this would imply that all the monomials of $f$ are in $\M_2$, which implies $f\in \mathcal{L}(M_2)$, a contradiction. Thus, $\M_{1,2}^r$ is the set of all subsets $M$ of monomials in $\M_1\setminus \M_2$ of size $r$, and we have to prove that the equality holds in the bound we have proved above. 

By the proof of Theorem \ref{T:fpsharp}, for each set $ M \subset\M_1\setminus \M_2$ with size $r$, we can find a set of $r$ polynomials $F$ such that $M=F_\ini$. Since $F_\ini\subset \M_1\setminus \M_2$, this implies that $F\subset \mathcal{L}(M_1)\setminus \mathcal{L}(M_2)$. Moreover, we have 
$$
\abs{V_\xu(F)}=\Delta^*(F)=\Delta^*(M).
$$
By choosing $M$ such that 
$$
\max \{\Delta^*(M): M\subset\M_1\setminus \M_2,\;\abs{M}=r\}=\Delta^*(M),
$$
we see that the lower bound we have obtained is attained by $F\subset \mathcal{L}(\M_1)\setminus \mathcal{L}(\M_2)$, which concludes the proof.
\end{proof}

From \cite{decreasingnormtrace}, we have the following result.

\begin{thm}\label{T:dual}
Assume $\xu=\{P_1,\dots,P_n\}$ and let $\M\subset \Delta(y^{q^{s-1}},u^{u(q-1)+1})$ be a decreasing set. The dual code $\ev(\M)^\perp$ is equivalent to $\ev(\M^c)$, where
$$
\M^c:=\left \{ \frac{x^{u(q-1)}y^{q^{s-1}-1}}{x^ay^b}: x^ay^b\in \Delta(y^{q^{s-1}},x^{u(q-1)+1})\setminus \M\right\}.
$$
More precisely,
$$
\ev(\M)^\perp=\beta \star \ev(\M^c),
$$
where
$$
\beta_i:=\begin{cases}
    u^{-1} &\text{ if the } x\text{-coordinate of }P_i \text{ is nonzero,}\\
    1 &\text{ otherwise,}
\end{cases}
$$
and $\star$ denotes the component-wise product of vectors.
\end{thm}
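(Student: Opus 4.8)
The plan is to prove the stated equality by a dimension count together with a single orthogonality computation, so that only the inclusion $\beta\star\ev(\M^c)\subseteq\ev(\M)^\perp$ must be checked. First I would record the dimensions. Since $\M,\M^c\subseteq\Delta(y^{q^{s-1}},x^{u(q-1)+1})$ and this footprint is a basis of $\fqs[x,y]/I_\xu$, evaluations of distinct footprint monomials are linearly independent, so $\dim\ev(\M)=\abs{\M}$ and $\dim\ev(\M^c)=\abs{\M^c}$. The map $x^ay^b\mapsto x^{u(q-1)-a}y^{q^{s-1}-1-b}$ is an involution of $\Delta(y^{q^{s-1}},x^{u(q-1)+1})$, so $\abs{\M^c}=\abs{\Delta(y^{q^{s-1}},x^{u(q-1)+1})}-\abs{\M}=n-\abs{\M}$. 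As $\beta$ has only nonzero entries, multiplication by $\beta$ preserves dimension, giving $\dim(\beta\star\ev(\M^c))=n-\abs{\M}=\dim\ev(\M)^\perp$. Hence it suffices to prove $\langle\ev(M),\beta\star\ev(M')\rangle=0$ for all monomials $M=x^ay^b\in\M$ and $M'\in\M^c$.

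Writing $M'=x^{u(q-1)-a''}y^{q^{s-1}-1-b''}$ with $x^{a''}y^{b''}\in\Delta(y^{q^{s-1}},x^{u(q-1)+1})\setminus\M$, this inner product is the twisted sum $S(e,f):=\sum_{P=(\alpha,\theta)\in\xu}\beta_P\,\alpha^{e}\theta^{f}$, where $e=u(q-1)+(a-a'')$ and $f=(q^{s-1}-1)+(b-b'')$. To evaluate $S$ I would use the fibered description of the points in Lemma~\ref{puntos} and Remark~\ref{rempuntos}: grouping by $\gamma=\alpha^u=\Tr(\theta)$ and noting that $\alpha=0$ occurs exactly on the fiber $\gamma=0$ (where $\beta_P=1$) while $\beta_P=u^{-1}$ on every fiber with $\gamma\neq0$, the sum factors within each fiber into a norm sum $\sum_{\alpha^u=\gamma}\alpha^{e}$ times a trace sum $\sum_{\Tr(\theta)=\gamma}\theta^{f}$. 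The norm sum vanishes unless $u\mid e$, in which case it equals $u\gamma^{e/u}$ for $\gamma\neq0$; the weight $u^{-1}$ is precisely what cancels this multiplicity $u$, and summing over $\gamma$ collapses to the clean formula $S(e,f)=\sum_{\theta\in\fqs}\Tr(\theta)^{e/u}\theta^{f}$ when $u\mid e$, and $S(e,f)=0$ when $u\nmid e$. (This already identifies $x^{u(q-1)}y^{q^{s-1}-1}$ as the unique footprint monomial on which the functional $\bar h\mapsto\sum_P\beta_P h(P)$ does not vanish, which is the conceptual reason $\M^c$ is the correct complement.)

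Now I would exploit that $\M$ is decreasing. Because $x^{a''}y^{b''}\notin\M$ while $x^ay^b\in\M$, the monomial $x^{a''}y^{b''}$ cannot divide $x^ay^b$, so $a<a''$ or $b<b''$; that is, $e<u(q-1)$ or $f<q^{s-1}-1$. Expanding $\Tr(\theta)=\sum_i\theta^{q^i}$ by the multinomial theorem gives $S(e,f)=\sum_{k_0+\cdots+k_{s-1}=e/u}\binom{e/u}{k_0,\dots,k_{s-1}}\sum_{\theta\in\fqs}\theta^{f+\sum_i k_iq^i}$, where the inner power sum over $\fqs$ equals $-1$ exactly when $f+\sum_i k_iq^i$ is a positive multiple of $q^s-1$ and is $0$ otherwise. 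A range estimate shows the exponent lies in $[0,2(q^s-1))$, so only the value $q^s-1$ can contribute. If $e<u(q-1)$ then $e/u\le q-2$, and combining $f\le 2(q^{s-1}-1)$ with $\sum_i k_iq^i\le (e/u)q^{s-1}$ bounds the exponent by $q^s-2$; hence no admissible multiple occurs and $S(e,f)=0$. The remaining case $f<q^{s-1}-1$ with $e\ge u(q-1)$ (so $q-1\le e/u\le 2(q-1)$) is the crux: a representation $\sum_i k_iq^i=q^s-1-f$ of total degree $e/u$ may exist, and one must show that every such representation forces its multinomial coefficient to vanish modulo the characteristic. I expect this to be the main obstacle, and I would resolve it by a Kummer/Lucas digit analysis: writing $q=p^t$, the hypothesis $f\le q^{s-1}-2$ makes the top base-$q$ block of $q^s-1-f$ digit-saturated, which forces a carry in every admissible base-$p$ addition realizing $\binom{e/u}{k_0,\dots,k_{s-1}}$, so the coefficient is $\equiv0\pmod p$ and $S(e,f)=0$ (for $s=2$ this reduces to the single coefficient $\binom{2(q-1)-f}{q-1}\equiv0\pmod p$). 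Together with the case $u\nmid e$, this yields the orthogonality, and the dimension count upgrades it to the claimed equality $\ev(\M)^\perp=\beta\star\ev(\M^c)$.
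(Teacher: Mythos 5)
The paper itself contains no proof of this statement: Theorem \ref{T:dual} is imported directly from \cite{decreasingnormtrace} (``From \cite{decreasingnormtrace}, we have the following result''), so there is no in-paper argument to compare against and your proposal must be judged on its own. Your route is sound and nearly complete. The dimension count is correct (the footprint $\Delta(y^{q^{s-1}},x^{u(q-1)+1})$ has exactly $n$ elements, is a basis of $\fqs[x,y]/I_\xu$, and the involution $x^ay^b\mapsto x^{u(q-1)-a}y^{q^{s-1}-1-b}$ gives $\abs{\M^c}=n-\abs{\M}$), the fiber decomposition via Lemma \ref{puntos}/Remark \ref{rempuntos} with the weight $u^{-1}$ cancelling the multiplicity $u$ is correct, the reduction to $S(e,f)=\sum_{\theta\in\fqs}\Tr(\theta)^{e/u}\theta^f$ when $u\mid e$ (and $S=0$ when $u\nmid e$) is correct, the use of the decreasing hypothesis to get $e<u(q-1)$ or $f<q^{s-1}-1$ is correct, and Case A is complete. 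Your ``crux'' claim in Case B is also \emph{true}, and your $s=2$ reduction to $\binom{2(q-1)-f}{q-1}\equiv 0 \pmod p$ is exactly right; but as written it is a plan, and completing it needs a step you do not mention: one must first show that any admissible tuple $(k_0,\dots,k_{s-1})$ is forced to equal the base-$q$ digit vector of $q^s-1-f$. (Base-$q$ carry analysis: if $E$ is the number of carries and $\sigma\geq 1$ is the digit sum of $q^{s-1}-1-f$, then $e/u=(q-1)+\sigma+(q-1)E$, and $e/u\leq 2(q-1)$ forces $E=0$.) Only then does the saturated top digit $k_{s-1}=q-1$ together with some nonzero lower $k_i$ force a base-$p$ carry and kill the multinomial coefficient.

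There is, moreover, a much shorter way through Case B that avoids Lucas/Kummer entirely: since $\Tr(\theta)\in\F_q$, one has the pointwise identity $\Tr(\theta)^m=\Tr(\theta)^{m'}$ for all $\theta\in\fqs$, where $m'\equiv m\pmod{q-1}$ and $1\leq m'\leq q-1$ (both sides vanish when $\Tr(\theta)=0$ because $m,m'\geq 1$). Replacing $m=e/u\in[q-1,2(q-1)]$ by $m'$ \emph{before} expanding by the multinomial theorem, every exponent in Case B is at most $(q^{s-1}-2)+(q-1)q^{s-1}=q^s-2<q^s-1$, so all power sums vanish by the same range estimate as in Case A. This turns the step you identified as the main obstacle into a two-line computation and makes the orthogonality argument uniform across both cases.
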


\begin{rem}
Since $\M^c$ is decreasing if $\M$ is decreasing, and $\beta$ does not depend on $\M$, we can also use Theorem \ref{T:RGHW} to compute the relative generalized Hamming weights of the duals of a pair of decreasing norm-trace codes.
\end{rem}

As stated in \cite{decreasingnormtrace}, the family of decreasing norm-trace codes contains one-point algebraic geometry codes over the norm-trace (in particular, Hermitian codes). With the usual notation $C(D_u,\lambda P_\infty)$ for one-point algebraic geometry codes, we consider $D_u=\sum_{P_i\in \xu}P_i$. Then, the evaluation of
$$
\mathcal{L}_\lambda=\left\{ x^ay^b\in \Delta(y^{q^{s-1}},x^{u(q-1)+1})\mid aq^{s-1}+bu\leq \lambda \right\}
$$
gives a basis for $C(D_u,\lambda P_\infty)$. Since this is a decreasing set, we can use Theorem \ref{T:fpsharp} to compute the generalized Hamming weights of this code. This recovers, in particular, the result by Barbero and Munuera \cite{Barbero_Munuera_00}, although in a less explicit way. Nevertheless, our techniques cover all the one-point algebraic geometry codes over the extended norm-trace curve.

As a particular case of Theorem \ref{T:RGHW}, if one considers a pair of one-point algebraic geometry codes over $\xu$, then we always have the equality in (\ref{eq:drel}). The relative generalized Hamming weights of one-point algebraic geometry codes have been previously studied in \cite{geilrghws} using the Feng-Rao bound. In our case, we use a footprint-like approach. Moreover, the authors in \cite{geilrghws} check that their bound is often tight for Hermitian codes, and in our case, we prove that our bound is always tight for norm-trace codes (in particular, for Hermitian codes). The particular case of Hermitian codes is studied in more depth in~\cite{christensennested}. 

The first relative generalized Hamming weight (also called relative minimum distance) of a pair of codes (and their duals) also has applications to quantum codes. As we show next, from a nested pair of codes $C_2\subset C_1$, one can construct a quantum error-correcting code (QECC) that encodes $\dim C_1-\dim C_2$ logical qudits using $n$ physical qudits. QECCs codes can correct two different types of errors, namely phase-shift errors and qudit-flip errors. The error-correction capabilities of the code for each type of error are denoted by $\delta_z$ and $\delta_x$, respectively, meaning that this code can correct up to $\lfloor (\delta_z-1)/2\rfloor$ phase-shift errors and up to $\lfloor (\delta_x-1)/2\rfloor$ qudit-flip errors. The error correction capabilities of the QECC are given precisely by the first relative generalized Hamming weight of the nested pair of codes $C_2\subset C_1$ and their duals $C_1^\perp \subset C_2^\perp$. We now provide the precise statement of the CSS construction that we use to construct quantum codes, which was obtained independently by Calderbank and Shor \cite{calderbankgoodquantum} and Steane \cite{Steane_96}. 

\begin{thm}\label{T:css}
Let $C_2\subset C_1\subset \fqs^n$ be linear codes. Then, we can construct an asymmetric quantum code with parameters 
$$
[[n,\dim C_1-\dim C_2,\delta_z/\delta_x]]_{q^s},
$$
where $\delta_z=M_1(C_1,C_2)$ and $\delta_x=M_1(C_2^\perp,C_1^\perp)$. 
\end{thm}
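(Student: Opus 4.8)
The plan is to realize this statement as an instance of the classical CSS recipe and then translate its parameters into the language of relative generalized Hamming weights. First I would build the stabilizer code attached to the nested pair by declaring the $X$-type stabilizers to be the operators $X(c)$ with $c\in C_1^\perp$ and the $Z$-type stabilizers to be the operators $Z(c)$ with $c\in C_2$. For this to define a genuine stabilizer (CSS) code, the two families must commute, which amounts to requiring $C_1^\perp\cdot C_2=0$, i.e.\ $C_1^\perp\subseteq C_2^\perp$, equivalently $C_2\subset C_1$. This is exactly the hypothesis, so the construction is well defined, and the code space is the joint eigenspace of all these stabilizers.

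Next I would count the logical qudits. Since the $X$-sector contributes $\dim C_1^\perp$ independent stabilizers and the $Z$-sector contributes $\dim C_2$, the number of encoded qudits is
$$
n-\dim C_1^\perp-\dim C_2=\dim C_1-\dim C_2,
$$
as claimed.

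The heart of the argument is the identification of the two asymmetric distances. In this CSS code a pure $Z$-type error $Z(a)$ is undetectable and acts nontrivially on the code space precisely when it commutes with every $X$-stabilizer, forcing $a\in(C_1^\perp)^\perp=C_1$, yet is not itself a $Z$-stabilizer, i.e.\ $a\notin C_2$; thus the nontrivial logical $Z$-operators are indexed by $C_1\setminus C_2$. Symmetrically, a pure $X$-type error $X(b)$ is an undetectable nontrivial logical operator exactly when $b\in C_2^\perp\setminus C_1^\perp$. The distances $\delta_z$ and $\delta_x$ are the minimum weights among these two families of logical operators. By Definition~\ref{d:rghw}, and using the observation recorded there that the condition $D\cap C'=\{0\}$ for a one-dimensional subcode $D$ is equivalent to $D\setminus\{0\}\subset C\setminus C'$, the first relative generalized Hamming weight $M_1(C,C')$ is precisely the minimum weight of a nonzero vector in $C\setminus C'$. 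Applying this with the pairs $(C_1,C_2)$ and $(C_2^\perp,C_1^\perp)$ gives $\delta_z=M_1(C_1,C_2)$ and $\delta_x=M_1(C_2^\perp,C_1^\perp)$, completing the identification of the parameters.

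The main obstacle is not the quantum construction itself, which is the well-known CSS scheme of Calderbank--Shor and Steane, but the bookkeeping needed to match conventions: one must verify that the orthogonality constraint $C_2\subset C_1$ is compatible with assigning $C_1^\perp$ to the $X$-sector and $C_2$ to the $Z$-sector, and that the set differences $C_1\setminus C_2$ and $C_2^\perp\setminus C_1^\perp$ controlling the two distances correspond to \emph{relative} generalized Hamming weights rather than to ordinary minimum distances. Once these assignments are fixed, the distance equalities follow immediately from the definition of $M_1$, and the stated error-correction capabilities $\lfloor(\delta_z-1)/2\rfloor$ and $\lfloor(\delta_x-1)/2\rfloor$ are the usual consequence of the minimum-weight logical operators having weights $\delta_z$ and $\delta_x$.
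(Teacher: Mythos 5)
Your proposal is correct. Note, however, that the paper does not prove this statement at all: Theorem \ref{T:css} is quoted as a known result, attributed to Calderbank--Shor \cite{calderbankgoodquantum} and Steane \cite{Steane_96}, so there is no internal proof to compare against; your stabilizer-formalism derivation is the standard modern justification (the original references instead construct the code space directly from coset states $\sum_{c\in C_2}|x+c\rangle$, $x\in C_1$, and the extension to qudits and to the \emph{relative} distances is folklore in the asymmetric-quantum-code literature). Two points in your write-up deserve emphasis because they are exactly where conventions matter. First, your assignment (X-sector $\leftrightarrow C_1^\perp$, Z-sector $\leftrightarrow C_2$) is the one that reproduces the theorem's labels $\delta_z=M_1(C_1,C_2)$, $\delta_x=M_1(C_2^\perp,C_1^\perp)$; the opposite, more common assignment yields the same code with $\delta_z$ and $\delta_x$ exchanged, which is precisely the remark the paper makes after the theorem about passing to $C_1^\perp\subset C_2^\perp$. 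Second, your identification of the distances as minimum weights over $C_1\setminus C_2$ and $C_2^\perp\setminus C_1^\perp$ (rather than over $C_1\setminus\{0\}$ and $C_2^\perp\setminus\{0\}$) is what makes the statement valid for degenerate (impure) codes: undetectable errors acting \emph{trivially} on the code space are quotiented out by the stabilizer, and the reduction of $M_1$ to these set differences uses that a one-dimensional subcode $D=\langle a\rangle$ satisfies $D\cap C'=\{0\}$ iff $a\notin C'$, by linearity of $C'$. This is the substantive content that the paper exploits later (codes with $M_1>d_1$), and your proof handles it correctly.
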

If $M_1(C_1,C_2)=d_1(C_1)$ and $M_1(C_2^\perp,C_1^\perp)=d_1(C_2^\perp)$, we say that the corresponding quantum code is \textit{impure}, and is called \textit{pure} otherwise (recall (\ref{eq:boundRGHWwithGHW})). Impure quantum codes are desired since they have decoding advantages and are, in general, difficult to obtain using classical codes~\cite{duadicImpure,grasslImpure}.

Notice that, by considering $C_1^\perp \subset C_2^\perp$ in the previous result, we obtain a quantum code with the same parameters, but with $\delta_z$ and $\delta_x$ exchanged. In all of our examples, we will choose the codes with $\delta_z\geq \delta_x$ since phase-shift errors are more likely to occur than qudit-flip errors \cite{ioffe}, but one can always obtain also a quantum code with the minimum distances exchanged by using the duals.

The following example shows how to obtain asymmetric quantum codes with good parameters using one-point codes over the norm-trace curve. 

\begin{ex}
Let $q=5$, $s=2$, and $u=3$. Then, we have $\abs{\xu}=65$. Consider the set $\M_2=\{1,y,x,y^2\}$ and $\M_1=\M_2\cup \{xy\}$. The codes $C_1=\ev(\M_1)$ and $C_2=\ev(\M_2)$ have parameters $[65,5,57]_{25}$ and $[65,4,59]_{25}$, respectively. The minimum distances can be obtained with Theorem \ref{T:fpsharp} (or \cite[Thm. 4.3]{decreasingnormtrace}). With respect to the dual codes, we use Theorem \ref{T:dual}. Note that $\M_1^c=\Delta(x^{13},y^5)\setminus\{ (x^{12}y^3)/M \mid M\in \M_1\}=\Delta(x^{13},y^5)\setminus\{x^{11}y^3,x^{12}y^2,x^{11}y^4,x^{12}y^3,x^{12}y^4\}$, and $\M_2^c=\M_1^c\cup \{  x^{11}y^3\}$. Then $C_1^\perp=\beta\star \ev(\M_1^c)$ and $C_2^\perp=\beta \star\ev(\M_2^c)$. As before, we can compute the parameters and obtain $[65,60,3]_{16}$ and $[65,61,3]_{16}$. If we just use the minimum distances of $C_1$, $C_2$, and their duals, using Theorem \ref{T:css} we obtain a quantum code with parameters $[[65,1,\geq 57/\geq 3]]_{16}$. In this case, we can use (\ref{eq:drel}) since $C_1$ and $C_2$ are constructed with monomials with increasing weighted degrees. Note that, since $d_1(C_1)<d_2(C_2)$, we have $d_1(C_1)=M_1(C_1,C_2)=57$. However, for the duals, we have $M_1(C_2^\perp,C_1^\perp)=4>d_1(C_1^\perp)=3$. This is because $\M_2^c\setminus \M_1^c=\{ x^{11}y^3\}$, and it is straightforward to check that $\Delta^*(x^{11}y^3)=61$. Thus, by using Theorem \ref{T:css}, this time with the relative minimum distances, the corresponding quantum code has parameters $[[65,1,57/4]]_{16}$, and this code is impure.
\end{ex}

We now introduce a way to check the goodness of these codes that have been previously used in \cite{galindonested,christensennested}. Assume that we have a pair of linear codes $C_2\subset C_1\subset \fqs^n$ of codimension $\ell$ and $M_1(C_1,C_2)=\delta_z$. Then we consider the greatest value $g(\ell,\delta)$ such that the tables of best-known linear codes ensure the existence of $C,C'\subset \fqs^n$ with $\dim C-\dim C'=\ell$, $d_1(C)=\delta_z$, and $d_1({C'}^\perp)\geq g(\ell,\delta_z)$. If we had $C'\subset C$, by Theorem \ref{T:css} this would give rise to a quantum code with parameters $[[n,\ell, \geq \delta_z/\geq g(\ell,\delta_z)]]_{q^s}$. In Table \ref{T:table}, we show some of the quantum codes that we can obtain by considering $C_1=C(D_u,\lambda_1 P_\infty)$ and $C_2=C(D_u,\lambda_2 P_\infty)$, with $\lambda_2<\lambda_1$, in Theorem \ref{T:css}. Note that the codes we obtain can give the same parameters as the best-known codes and even outperform them in some cases. This is remarkable since, using the best-known codes, we do not have $C'\subset C$ in general (in fact, it is quite unlikely). The most commonly used table of best-known linear codes is \cite{codetables}. However, this table only covers linear codes over finite field sizes of at most 9. Since the finite fields we are considering for norm-trace codes can be bigger, we will use \cite{mint} instead to compute $g(\ell,\delta)$. Also, since a similar approach is taken in \cite{christensennested} for Hermitian codes, we will not consider the case with $s=2$ and $u=(q^s-1)/(q-1)=q+1$, since it corresponds precisely to the Hermitian curve.

In Table \ref{T:table}, if there is no code $C$ with $d_1(C)=\delta_z$, we have considered the code with the highest dimension and $d_1(C)> \delta_z$, and we have marked this cases with $^*$. Impure codes are marked with $^\star$, and all of them satisfy $M_1(C_1,C_2)=d_1(C_1)$, $M_2(C_2^\perp,C_1^\perp)>d_2(C_2^\perp)$, except the codes $[[32,1,15/8]]_8$ and $[[32,1,12/10]]_8$, which have both $M_1(C_1,C_2)>d_1(C_1)$ and $M_2(C_2^\perp,C_1^\perp)>d_2(C_2^\perp)$. We highlight a couple of particular cases now. For $q=5$, $s=2$, and $u=3$, we obtain several codes whose $d_x$ surpasses the value $g(\ell,\delta_z)$. However, these codes are not impure, and what is happening is that we are actually using codes that surpass the best-known parameters of linear codes in \cite{mint}. Indeed, the codes $[65,3,60]_{25}$, $[65,4,59]_{25}$ and $[65,55,8]_{25}$, corresponding to $\lambda$ equal to 5, 6 and 58, respectively, have better minimum distance than the codes stated in \cite{mint} for the same length and dimension (this explains why $g(2,60)$ and $g(3,59)$ cannot be computed for $[[65,2,60/2]]_{25}$ and $[[65,3,59/2]]_{25}$, respectively). We also highlight the code $[[45,1,41/4]]_{25}$, for $q=5$, $s=2$ and $u=2$. This code has $\delta_x=4>3=g(\ell,\delta_z)$, but the classical codes used for its construction do not have better parameters than those in \cite{mint}. Therefore, the fact that the code performs better than the theoretical quantum code constructed with the best-known linear codes shows how, by considering the relative minimum distance in Theorem \ref{T:css}, we can obtain impure quantum codes with excellent performance. We also note that all the codes presented in Table \ref{T:table} surpass the Gilbert-Varshamov bound for asymmetric quantum codes \cite[Thm. 4]{matsumotoAsymmetricGV}. 

\begin{table}[ht]
\centering
\begin{tabular}{c|c|c}
 \hline 
 $(\lambda_1,\lambda_2)$&Parameters&$g(\ell,\delta_z)$ \\
  \hline \hline
  &$q=2, s=3, u=7$\\ 
  \hline
$(4,0)$& $[[32,1,28/2]]_8$& 2  \\

$(8,7)$& $[[32,1,24/3]]_8$& 4  \\
$(11,8)$& $[[32,1,21/4]]_8^\star$& $4^*$  \\
$(15,14)$& $[[32,1,18/6]]_8^\star$& 6  \\
$(19,18)$& $[[32,1,15/8]]_8^\star$& 8  \\
$(23,22)$& $[[32,1,12/10]]_8^\star$& 11  \\

$(7,0)$& $[[32,2,25/2]]_8$& 2  \\
$(12,8)$& $[[32,2,20/4]]_8^\star$& 5  \\
$(16,14)$& $[[32,2,16/5]]_8^\star$& 6  \\

$(8,0)$& $[[32,3,24/2]]_8$& 2  \\
\hline \hline
  &$q=2, s=4, u=15$\\ 
  \hline
$(8,0)$& $[[128,1,120/2]]_{16}$& 2  \\
$(16,15)$& $[[128,1,112/3]]_{16}$& 3  \\
$(23,16)$& $[[128,1,105/4]]_{16}^\star$& $4^*$  \\
$(31,30)$& $[[128,1,98/6]]_{16}^\star$& $7$  \\

$(24,16)$& $[[128,2,104/4]]_{16}^\star$& 4  \\
$(32,30)$& $[[128,2,96/5]]_{16}^\star$& 7  \\

\hline \hline
  &$q=3, s=2, u=2$\\ 
  \hline
$(2,0)$ & $[[15,1,13/2]_{9}$ & 2 \\
$(4,3)$ & $[[15,1,11/3]_{9}$ &  3\\
$(5,4)$ & $[[15,1,10/4]_{9}$ &  4\\
$(6,5)$ & $[[15,1,9/5]_{9}$ &  5\\
$(7,6)$ & $[[15,1,8/6]_{9}$ &  6\\
$(8,7)$ & $[[15,1,7,7]_{9}$ &  7\\

$(3,0)$ & $[[15,2,12/2]_{9}$ & 2 \\
$(5,3)$ & $[[15,2,10/3]_{9}$ & 3 \\
$(6,4)$ & $[[15,2,9/4]_{9}$ &  4\\
$(7,5)$ & $[[15,2,8/5]_{9}$ & 5 \\
$(8,6)$ & $[[15,2,7/6]_{9}$ & 6 \\

$(4,0)$ & $[[15,3,11/2]_{9}$ & 2 \\
$(6,3)$ & $[[15,3,9/3]_{9}$ & 3 \\
$(7,4)$ & $[[15,3,8/4]_{9}$ & 4 \\
$(8,5)$ & $[[15,3,7/5]_{9}$ &  5\\
$(9,6)$ & $[[15,3,6/6]_{9}$ & 6 \\
\end{tabular}
\begin{tabular}{c|c|c}
 \hline 
 $(\lambda_1,\lambda_2)$&Parameters&$g(\ell,\delta_z)$ \\
  \hline \hline
  &$q=5, s=2, u=2$\\ 
  \hline
$(2,0)$& $[[45,1,43/2]]_{25}$& 2  \\
$(4,2)$& $[[45,1,41/3]]_{25}^\star$& 2  \\
$(6,5)$& $[[45,1,39/4]]_{25}$& 4  \\
$(8,7)$& $[[45,1,37/5]]_{25}$& 5  \\
$(9,8)$& $[[45,1,36/6]]_{25}$& 6  \\
$(10,9)$& $[[45,1,35/7]]_{25}$& 7  \\

$(4,0)$& $[[45,2,41/2]]_{25}$& 2  \\
$(7,5)$& $[[45,2,38/4]]_{25}$& 4  \\
$(8,6)$& $[[45,2,36/5]]_{25}$& 5  \\
$(9,7)$& $[[45,2,35/6]]_{25}$& 6  \\

$(5,0)$& $[[45,3,40/2]]_{25}$& 2  \\
$(8,5)$& $[[45,3,37/4]]_{25}$& 4  \\
$(10,7)$& $[[45,3,35/5]]_{25}$& 5  \\

  \hline \hline
  &$q=5, s=2, u=3$\\ 
  \hline
$(3,0)$& $[[65,1,62/2]]_{25}$& $2^*$  \\

$(6,5)$& $[[65,1,59/3]]_{25}$& 2  \\
$(8,6)$& $[[65,1,57/4]]_{25}^\star$& 4  \\
$(11,10)$& $[[65,1,54/6]]_{25}^\star$& 6  \\
$(14,13)$& $[[65,1,51/8]]_{25}$& 7  \\
$(16,15)$& $[[65,1,49/9]]_{25}$& 9  \\

$(5,0)$& $[[65,2,60/2]]_{25}$& $-$  \\
$(8,5)$& $[[65,2,57/3]]_{25}$& 3  \\
$(9,6)$& $[[65,2,56/4]]_{25}^\star$& 4  \\
$(12,10)$& $[[65,2,53/5]]_{25}$& 6  \\
$(14,12)$& $[[65,2,51/6]]_{25}$& 6  \\
$(15,13)$& $[[65,2,50/8]]_{25}$& 7  \\

$(6,0)$& $[[65,3,59/2]]_{25}$& $-$  \\
$(9,5)$& $[[65,3,56/3]]_{25}$& 3  \\
$(13,10)$& $[[65,3,52/5]]_{25}$& 6  \\
$(15,12)$& $[[65,3,50/6]]_{25}$& 6  \\
$(16,13)$& $[[65,3,49/8]]_{25}$& 7  \\

\end{tabular}
\caption{Parameters of the quantum codes corresponding to the pairs of codes $C(D_u,\lambda_1 P_\infty)$, $C(D_u,\lambda_2 P_\infty)$, with $\lambda_2<\lambda_2$. The star $^\star$ means the code is impure, and the asterisk $^*$ means that there is no code $C$ with $d_1(C)=\delta_z$ in \cite{mint}.}\label{T:table}
\end{table}

\section{Future work}  
Similarly to Section \ref{sec:GHW_RMlike}, it would be interesting to study how to construct the set of monomials that gives the maximum value of $\Delta^*$ for one-point algebraic geometry codes over the extended norm-trace curve. Similarly, another research direction would be to generalize these techniques for other families of curves and, more generally, different families of decreasing evaluation codes.

\bibliographystyle{abbrv}
\bibliography{BIBR}

\end{document}